\newtheorem{theorem}{Theorem}
\newtheorem{corollary}{Corollary}[theorem]
\newcommand{\makematrixshortcut}[2]{%
	\expandafter\newcommand\csname #1n\endcsname{{#2_n}}%
	\expandafter\newcommand\csname #1ninv\endcsname{{#2_n}^{-1}}%
	\expandafter\newcommand\csname #1npinv\endcsname{{#2_n}^+}%
	\expandafter\newcommand\csname #1ntop\endcsname{{#2_n}^\top}%
	\expandafter\newcommand\csname #1nplusone\endcsname{#2_{n+1}}%
	\expandafter\newcommand\csname #1nplusoneinv\endcsname{{#2_{n+1}}^{-1}}%
	\expandafter\newcommand\csname #1nplusonepinv\endcsname{{#2_{n+1}}^+}%
	\expandafter\newcommand\csname #1nplusonetop\endcsname{{#2_{n+1}}^\top}%
	\expandafter\newcommand\csname #1ntilde\endcsname{\tilde{#2}_n}%
	\expandafter\newcommand\csname #1ntildeinv\endcsname{\tilde{#2_n}^{-1}}%
	\expandafter\newcommand\csname #1ntildetop\endcsname{\tilde{#2_n}^\top}%
	\expandafter\newcommand\csname #1nplusonetilde\endcsname{\tilde{#2}_{n+1}}%
	\expandafter\newcommand\csname #1nplusonetildeinv\endcsname{\tilde{#2_{n+1}}^{-1}}%
	\expandafter\newcommand\csname #1nplusonetildetop\endcsname{\tilde{#2_{n+1}}^\top}%
}
\newcommand{\makevectorshortcut}[2]{%
	\expandafter\newcommand\csname #1n\endcsname{{#2_n}}%
	\expandafter\newcommand\csname #1ntop\endcsname{{#2_n}^\top}%
	\expandafter\newcommand\csname #1nrej\endcsname{#2_n^r}%
	\expandafter\newcommand\csname #1nrejtop\endcsname{{#2_n^{rej}}^\top}%
	\expandafter\newcommand\csname #1nproj\endcsname{#2_n^p}%
	\expandafter\newcommand\csname #1nprojtop\endcsname{{#2_n^p}^\top}%
	\expandafter\newcommand\csname #1nplusone\endcsname{#2_{n+1}}%
	\expandafter\newcommand\csname #1nplusonetop\endcsname{{#2_{n+1}}^\top}%
	\expandafter\newcommand\csname #1nplusonerej\endcsname{#2_{n+1}^{rej}}%
	\expandafter\newcommand\csname #1nplusonerejtop\endcsname{{#2_{n+1}^{rej}}^\top}%
	\expandafter\newcommand\csname #1nplusoneproj\endcsname{#2_{n+1}^p}%
	\expandafter\newcommand\csname #1nplusoneprojtop\endcsname{{#2_{n+1}^p}^\top}%
}
\newcommand{\add}[1]{\textcolor{black}{ #1}}
\journal{Applied Mathematics and Computation}
\begin{document}
\begin{frontmatter}

\title{Efficient Recursive Least Squares Solver for Rank-Deficient Matrices}

\author{Ruben Staub\corref{mycorrespondingauthor}}
\ead{ruben.staub@ens-lyon.org}
\address{Univ Lyon, Ecole Normale Sup\'erieure de Lyon, CNRS Universit\'e Lyon
1, Laboratoire de Chimie  UMR 5182, 46 all\'ee d'Italie,  F-69364, LYON, France}
\author{Stephan N. Steinmann\corref{mycorrespondingauthor}}
\ead{stephan.steinmann@ens-lyon.fr}
\address{Univ Lyon, Ecole Normale Sup\'erieure de Lyon, CNRS Universit\'e Lyon
1, Laboratoire de Chimie  UMR 5182, 46 all\'ee d'Italie,  F-69364, LYON, France}

\begin{abstract}
Updating a linear least squares solution can be critical for near real-time signal-processing applications.
The Greville algorithm proposes a simple formula for updating the pseudoinverse of a matrix $A\in \mathbb{R}^{n\times m}$ with rank $r$. In this paper, we explicitly derive a similar formula by maintaining a general rank factorization, which we call rank-Greville.
Based on this formula, we implemented a recursive least squares algorithm exploiting the rank-deficiency of $A$, achieving the update of the minimum-norm least-squares solution  in $O(mr)$ operations and, therefore, solving the linear least-squares problem from scratch in $O(nmr)$ operations.
We empirically confirmed that this algorithm  displays a better asymptotic time complexity than LAPACK solvers for rank-deficient matrices. The numerical stability of rank-Greville was found to be comparable to Cholesky-based solvers. Nonetheless, our implementation supports exact numerical representations of rationals, due to its remarkable algebraic simplicity.
\end{abstract}

\begin{keyword}
Moore-Penrose pseudoinverse \sep Generalized inverse \sep Recursive least squares \sep Rank-deficient linear systems
\end{keyword}

\end{frontmatter}


\section{Introduction}
In this paper, we are interested in computationally efficient algorithms for solving the recursive least-squares problem on rank-deficient matrices.

Let $\Gamma_i \in \mathbb{R}^m$ represent an observation of $m$ variables, called regressors, associated with measurement $y_i$ and let $\xn$ be the unknown parameters of the linear relation:
\begin{equation}\label{label_eq_linear_relation}
\An \xn + \epsilon_n = Y_n
\end{equation}
where $A_n$ is an $n\times m$ matrix representing $n$ such observations, $Y_n$ contains associated measurements and $\epsilon_n$ is the random disturbance:
\begin{equation}
\An = \begin{pmatrix}
{\Gamma_1}^\top \\
{\Gamma_2}^\top \\
\vdots \\
\Gntop
\end{pmatrix}, \quad Y_n = \begin{pmatrix}
y_1 \\
y_2 \\
\vdots \\
y_n
\end{pmatrix}
\end{equation}

The solution $\xn$ to the general linear least-squares problem of equation \ref{label_eq_linear_relation} is defined as:
\begin{equation}
\xn = \underset{x\in S}{\operatorname{arg\,min}}(||x||_2),\quad S = \underset{x}{\operatorname{arg\,min}}(||\An x - Y_n||_2)
\label{label_eq_least_squares}
\end{equation}
This solution is unique\cite{bjorck1996numerical}, and sometimes called minimum-norm least-squares solution. Because of its uniqueness, it is sometimes simply referred to as the least-squares solution.

As demonstrated in the seminal paper by Penrose, the least-squares solution $\xn$ can also be written\cite{penrose_1956_least_squares}:
\begin{equation}\label{label_eq_least_squares_pseudoinverse}
\xn = \Anpinv Y_n
\end{equation}
where $\Anpinv$ is the pseudoinverse of $\An$, also called generalized inverse, or Moore-Penrose inverse. Due to its practical importance, the numerical determination of the generalized inverse remains an active topic of research.\cite{rakha_moorepenrose_2004,toutounian_new_2009}

The pseudoinverse $A^+ \in \mathbb{C}^{m\times n}$ of any matrix $A \in \mathbb{C}^{n\times m}$ is uniquely\cite{moore1920reciprocalmatrix} characterized by the four Penrose equations\cite{penrose_1955_pseudoinverse}:
\begin{align}
A A^+ A &= A \label{label_eq_penrose:1}\\
A^+ A A^+ &= A^+ \label{label_eq_penrose:2}\\
(A A^+)^\top &= A A^+ \label{label_eq_penrose:3}\\
(A^+ A)^\top &= A^+ A \label{label_eq_penrose:4}
\end{align}


Here, we are interested in a particular problem, i.e., updating a least-squares solution (or the generalized inverse) when a new observation ($\Gnplusone$, $y_{n+1}$) is added. This is typically called the recursive least-squares (RLS) problem for which the updated solution $\xnplusone$ is usually written\cite{bjorck1996numerical, least_squares_overview}:
\begin{equation}\label{label_eq_recursive_least_squares}
\xnplusone = \xn + K \times (y_{n+1} - \Gnplusonetop\xn)
\end{equation}
where $y_{n+1} - \Gnplusonetop\xn$ is the predicted residual (or \textit{a priori} error), and $K$ is called the Kalman gain vector.

 Algorithms that allow to update an already known previous solution can be of critical importance for embedded-systems signal processing, for example, as near real-time solutions might be required and new observations are added continuously\cite{least_squares_overview}.
Therefore, recursive least squares algorithms significantly benefit from the computational efficiency introduced by updating the least square solution instead of recomputing it from scratch.


If $\An $ has full column rank, the recursive least-squares solution $\xnplusone$ of equation (\ref{label_eq_recursive_least_squares}) can be straightforwardly computed using normal equations\cite{bjorck1996numerical}. 
This RLS algorithm has a time complexity of $O(m^2)$ for each update. Therefore, computing the solution for $n$ successive observations, using equation (\ref{label_eq_recursive_least_squares}), lead to a total time complexity of $O(nm^2)$.

However in the general case, $\An$ can be rank deficient, i.e. neither full column rank nor full row rank. This is indeed the case if we want to sample a large variable space (column deficiency), while accumulating data redundancy on the subspace of observed variables (row deficiency). Handling rank deficiency is, for example, desirable in neurocomputational learning applications\cite{Courrieu2005FastCO}.
Several algorithms have been developed to solve the rank-defficient recursive least-squares problem. In particular, the Greville algorithm\cite{greville} was designed for the recursive least-squares problem specifically, whereas most of the other algorithms are common least-squares solvers (based on Cholesky decomposition, QR factorization, SVD, \ldots) adapted to support updates of $\An$ (without the need to recompute the whole solution)\cite{bjorck1996numerical}. 

The Greville algorithm provides \add{\iffalse{with}\fi} an updated least squares solution, and additionally, an updated pseudoinverse at the same cost. This update step still has computational complexity in $O(m^2)$, independently of the rank deficiency of $\An $. This leads to a $O(nm^2)$ time complexity for the computation of the full solution.\footnote{Note that one can easily reach $O(\max(n,m)\min(n,m)^2)$ using the property $(\Antop)^+ = (\Anpinv)^T$.} Variants of this algorithm were developed\cite{albert_sittler} based on the implicit decomposition of $\An$, but still with an $O(m^2)$ update complexity\footnote{Using the notations defined below, Albert and Sittler maintain the $m\times m$ matrices $(1 - \Cntop\Cntilde)$ and $(\Cntildetop\Pninv\Cntilde)$, whereas we maintain $\Cn$, $\Cntilde$ and $\Pninv$ (whose sizes are reduced).}.

In this paper, we write and implement a recursive least-squares algorithm that has single-update time complexity in $O(mr)$ (i.e. $O(nmr)$ total time complexity), where $r$ is the rank of the matrix $\An $. The underlying idea is to maintain a general rank decomposition into a full row rank matrix (purely underdetermined system) and a full column rank matrix (purely overdetermined system), which are much easier to treat in a recursive least squares procedure. Indeed, due to the rank deficiency of $\An $ these matrices have reduced sizes, leading to more efficient updates.

The remarkable simplicity of this approach makes it compatible with exact numerical representations in practice, without the need to use expensive symbolic computing. We also explore slightly more sophisticated rank decompositions, effectively bridging the gap between the Greville algorithm and QR-based recursive least-squares solvers.


\section{Derivation}
\subsection{General non-orthogonal rank factorization}

Let $\An $ be a $n\times m$ matrix of rank $r$. $\An $ can be expressed as the product\cite{greville}:
\begin{equation}\label{label_eq_mat_prod}
\An  = \Bn\Cn
\end{equation}
where $\Bn$ is a $n \times r$ matrix, and $\Cn$ is a $r \times m$ matrix, both of rank $r$, which corresponds to a full-rank factorization.

Let us consider a maximal free family $\mathcal{B_A}$ of observations among observations $\Gamma_{i} \in \mathbb{R}^m$ in $\An $. Note that $\mathcal{B_A}$ is a basis of $\operatorname{Im}(\Antop)$. Such basis is represented by the rows of $\Cn$. Interpreting rows of $\Cn$ as observations $\Gamma_{C_i} \in \mathbb{R}^m$, we find that each observation $\Gamma_{C_i}$ in $\Cn$ is linearly independent of the others. Hence $\Cn$ can then be seen as a purely underdetermined system. This system can be thought as linking the fitted value of each observation in $\mathcal{B_A}$, to the fitted values for the $m$ variables themselves.



Interpreting rows of $\Bn$ as observations $\gamma_{i} \in \mathbb{R}^r$, we find that each observation $\gamma_{i}$ in $\Bn$ is the observation $\Gamma_{i}$ of $\An$ expressed in the $\mathcal{B_A}$ basis. Therefore, $\Bn$ can be seen as a purely overdetermined system, since each observation in $\mathcal{B_A}$ is observed at least once. One can consider that this system links the value of each observation $\Gamma_{i}$ in $\An$, to the fitted values for each observation in $\mathcal{B_A}$.

\begin{theorem}\label{label_th_explicit_pseudo}
The pseudoinverse $\Anpinv$ of $\An$ can be computed in $O(nmr)$ if matrices $\Bn$ and $\Cn$ verifying equation \ref{label_eq_mat_prod} are known. An explicit formula is then given by:
\begin{equation}
\Anpinv = \Cntop(\Cn\Cntop)^{-1}(\Bntop\Bn)^{-1}\Bntop
\label{eq_explicit_pseudo}
\end{equation}
\end{theorem}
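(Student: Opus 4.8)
The plan is to verify directly that the right-hand side of equation~\ref{eq_explicit_pseudo} satisfies the four Penrose equations~\ref{label_eq_penrose:1}--\ref{label_eq_penrose:4}, and then invoke the uniqueness of the pseudoinverse (stated above) to conclude. Write $X = \Cntop(\Cn\Cntop)^{-1}(\Bntop\Bn)^{-1}\Bntop$ for the candidate. Since $\Bn$ has full column rank $r$, the $r\times r$ matrix $\Bntop\Bn$ is invertible, and since $\Cn$ has full row rank $r$, the $r\times r$ matrix $\Cn\Cntop$ is invertible, so $X$ is well defined. The two identities that drive every computation are the ``collapsing'' relations $(\Bntop\Bn)^{-1}\Bntop\Bn = I_r$ and $\Cn\Cntop(\Cn\Cntop)^{-1} = I_r$; informally, $X$ is nothing but $\Cnpinv\Bnpinv$, the product of the pseudoinverses of the two full-rank factors.

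First I would check equations~\ref{label_eq_penrose:1} and~\ref{label_eq_penrose:2}. Substituting $\An = \Bn\Cn$ into $\An X = \Bn\Cn\Cntop(\Cn\Cntop)^{-1}(\Bntop\Bn)^{-1}\Bntop$, the factor $\Cn\Cntop(\Cn\Cntop)^{-1}$ collapses to $I_r$, leaving $\An X = \Bn(\Bntop\Bn)^{-1}\Bntop$; multiplying on the right by $\An=\Bn\Cn$ and collapsing $(\Bntop\Bn)^{-1}\Bntop\Bn=I_r$ gives $\An X\An = \Bn\Cn = \An$, which is~\ref{label_eq_penrose:1}. The computation for $X\An X = X$ is symmetric: $X\An$ collapses to $\Cntop(\Cn\Cntop)^{-1}\Cn$, and one further collapse yields $X$.

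Next I would establish the symmetry conditions~\ref{label_eq_penrose:3} and~\ref{label_eq_penrose:4}. The intermediate products obtained above are exactly $\An X = \Bn(\Bntop\Bn)^{-1}\Bntop$ and $X\An = \Cntop(\Cn\Cntop)^{-1}\Cn$. Each is visibly symmetric, because $\Bntop\Bn$ and $\Cn\Cntop$ are symmetric and hence so are their inverses: transposing $\Bn(\Bntop\Bn)^{-1}\Bntop$ returns itself, and likewise for $X\An$. This settles all four Penrose equations, so $X = \Anpinv$ by uniqueness.

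Finally, for the complexity claim I would count operations in a fixed multiplication order: form $\Bntop\Bn$ ($O(nr^2)$) and $\Cn\Cntop$ ($O(mr^2)$) and invert both ($O(r^3)$); multiply the two $r\times r$ inverses ($O(r^3)$); left-multiply by $\Cntop$ to obtain an $m\times r$ block ($O(mr^2)$); and finally right-multiply by $\Bntop$ to obtain the $m\times n$ result ($O(nmr)$). Since $r\le\min(n,m)$, the last product dominates and the total cost is $O(nmr)$. I do not anticipate a genuine obstacle: the argument is a routine substitution once the two collapsing identities are isolated, and the only points requiring care are confirming invertibility of $\Bntop\Bn$ and $\Cn\Cntop$ from the rank hypotheses and ordering the matrix products so that no intermediate step exceeds the claimed $O(nmr)$ bound.
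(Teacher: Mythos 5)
Your proposal is correct and follows essentially the same route as the paper: establish invertibility of $\Bntop\Bn$ and $\Cn\Cntop$ from the full-rank factorization, verify the four Penrose equations and invoke uniqueness, then bound the cost of the matrix products by $O(nmr)$. The only difference is that you carry out explicitly the Penrose-equation check that the paper declares ``straightforward,'' which is a fair thing to do.
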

\begin{proof}
By definition of $\Bn$ and $\Cn$, $\Bntop\Bn$ and $\Cn\Cntop$ are both $r \times r$ matrices of rank $r$, and therefore non-singular. As a consequence, the explicit formula given is well defined as long as $\Bn$ and $\Cn$ correspond to a full-rank factorization.

It is straightforward to check that Eq. \ref{eq_explicit_pseudo} satisfies all four Penrose equations (Eq. \ref{label_eq_penrose:1} to \ref{label_eq_penrose:4}), and therefore, represents an acceptable pseudoinverse of $\An$\cite{greville}. By the unicity of the Moore-Penrose inverse, we conclude that $\Anpinv = \Cntop(\Cn\Cntop)^{-1}(\Bntop\Bn)^{-1}\Bntop$.
Computing the pseudoinverse can, therefore, be reduced to computing the inverse of two $r \times r$ matrices and three matrix multiplications giving rise (in any order) to a total $O(nmr)$ time complexity, that could even be reduced by using faster algorithms\cite{fast_rect_dot}.
\end{proof}

We are now interested in the update of the pseudoinverse of $\An$ when adding a new observation $\Gnplusone$. Let us define $\Anplusone$ as: 
\begin{equation}
	\Anplusone = \begin{pmatrix}
		\An \\
		\Gnplusonetop \\
	\end{pmatrix} = \Bnplusone \Cnplusone
\label{label_eq_new_rank_decompose}
\end{equation}

We distinguish two cases depending on the linear dependency of $\Gnplusone$ with respect to previous observations $\Gamma_1,\,\ldots,\,\Gamma_n$. Note that we can equally well only consider the observations in $\mathcal{B_A}$, since $\mathcal{B_A}$ is a basis for $\operatorname{Vect}(\Gamma_1,\,\ldots,\,\Gamma_n)$.

Let $P_\mathcal{B_A}$ be the projector into $\operatorname{Vect}(\mathcal{B_A}) = \operatorname{Im}(C^\top)$. We define $\Gnplusoneproj \in \mathbb{R}^m$ the projection of $\Gnplusone$ into $\operatorname{Vect}(\mathcal{B_A})$. We also define $\gnplusoneproj \in \mathbb{R}^r$ as $\Gnplusoneproj$ expressed in the $\mathcal{B_A}$ basis.
If $\mathcal{B_A}$ was an orthonormal basis, the decomposition $\gnplusoneproj$ could be easily computed by inner products with $\gnplusoneproj =\Cn\Gnplusone$. However, in the general (non-orthonormal) case, the decomposition $\gnplusoneproj$ of $P_\mathcal{B_A}\Gnplusone$ can be obtained using the dual $\tilde{\mathcal{B_A}}$ of $\mathcal{B_A}$, represented by the rows of $\Cntilde$ defined as:
\begin{equation}
\begin{aligned}
\Cntilde &= (\Cn\Cntop)^{-1}\Cn \\
 &= \sigma_n^{-1}\Cn \\
\end{aligned}
\end{equation}
where $\sigma_n = \Cn\Cntop$ is the Gram matrix of observations in $\mathcal{B_A}$. $\gnplusoneproj$ can then be expressed by:
\begin{equation}\label{label_eq_gnplusoneproj}
\begin{aligned}
	\gnplusoneproj &= \Cntilde P_\mathcal{B_A}\Gnplusone\\
	 &= \Cntilde\Gnplusone\\
\end{aligned}
\end{equation}

and $\Gnplusoneproj$ can then be expressed by:
\begin{equation}\label{label_eq_Gnplusoneproj}
\begin{aligned}
	\Gnplusoneproj &= P_\mathcal{B_A}\Gnplusone\\
	 &= \Cntop\Cntilde\Gnplusone\\
	 &= \Cntop\gnplusoneproj
\end{aligned}
\end{equation}

We define the rejection vector $\Gnplusonerej \in \mathbb{R}^m$ associated with the projection of $\Gamma_{n+1}$ into $\mathcal{B_A}$:
\begin{equation}
\begin{aligned}
\Gnplusone &= P_\mathcal{B_A}\Gnplusone + \Gnplusonerej\\
 &= \Gnplusoneproj + \Gnplusonerej\\
 &= \Cntop\gnplusoneproj + \Gnplusonerej\\
 &= \Cntop\Cntilde\Gnplusone + \Gnplusonerej
\end{aligned}
\end{equation}

It becomes clear that $\Gnplusone$ is linearly dependent from the previous observations $\Gamma_1,\,\ldots,\,\Gamma_n$, if and only if, $\Gnplusonerej$ is null. Note that $\gnplusoneproj$ , $\Gnplusoneproj$ and $\Gnplusonerej$ can be computed in $O(mr)$ if $\Cntilde$ and $\Cn$ are already known.

The pseudoinverse $\Anpinv$ can then be rewritten:
\begin{equation}
\Anpinv = \Cntildetop \Pninv \Bntop
\end{equation}
where 
\begin{equation}
\Pn = \Bntop \Bn = \sum\limits_{i=1}^n \gamma_i^p \cdot {\gamma_i^p}^\top
\end{equation}

We finally define $\znplusone \in \mathbb{R}^{r}$ and $\bnplusone \in \mathbb{R}^{n}$ as:
\begin{equation}\label{label_eq_znplusone}
\znplusone = \Pninv\gnplusoneproj,\quad \bnplusone = \Bn\znplusone
\end{equation}
Note that if $\Cntilde$, $\Cn$ and $\Pninv$ are already known, $\znplusone$ can be computed in $O(mr)$, but $\bnplusone$ can only be computed in $O(\max(m,n)r)$ if $\Bn$ is also known.

\begin{theorem}\label{label_th_pinv_indep}
If $\Gnplusone \neq 0$ is linearly independent from previous observations, the pseudoinverse $\Anplusonepinv$ of $\Anplusone$ can be updated in $O(mn)$ if $\Anpinv$, $\Bn$, $\Pninv$, $\Cn$ and $\Cntilde$ are known. An explicit formula is then given by:
\begin{equation}
\Anplusonepinv = \begin{pmatrix}
		\Anpinv & 0 \\
	\end{pmatrix} + \frac{\Gnplusonerej}{\left\|\Gnplusonerej\right\|_2^2} \begin{pmatrix}
		-\bnplusonetop & 1 \\
	\end{pmatrix}
\end{equation}
\end{theorem}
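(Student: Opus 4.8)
The plan is to \textbf{verify the four Penrose equations} (Eqs. (\ref{label_eq_penrose:1})--(\ref{label_eq_penrose:4})) for the proposed expression and then conclude by uniqueness of the Moore-Penrose inverse, exactly as in the proof of Theorem \ref{label_th_explicit_pseudo}. The formula is the natural ansatz obtained by appending $\Gnplusone$ as a fresh independent direction: taking $\Cnplusone = \left(\begin{smallmatrix}\Cn \\ \Gnplusonetop\end{smallmatrix}\right)$ and $\Bnplusone = \left(\begin{smallmatrix}\Bn & 0 \\ 0 & 1\end{smallmatrix}\right)$ gives a genuine rank-$(r+1)$ factorization of $\Anplusone$, from which the update can be read off. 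One could in principle feed these enlarged factors into Theorem \ref{label_th_explicit_pseudo}, but the resulting algebra is opaque; the rejection vector $\Gnplusonerej$ makes a direct check far more transparent, so that is the route I would take.

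First I would assemble the elementary identities that drive every cancellation. Because $\Gnplusonerej$ is orthogonal to $\operatorname{Vect}(\mathcal{B_A}) = \operatorname{Im}(\Cntop)$, we have $\Cn\Gnplusonerej = 0$, and hence $\An\Gnplusonerej = \Bn\Cn\Gnplusonerej = 0$, $\Cntilde\Gnplusonerej = 0$, and $\Gnplusonerejtop\Anpinv = 0$ via $\Anpinv = \Cntildetop\Pninv\Bntop$. Orthogonality of $\Gnplusoneproj$ and $\Gnplusonerej$ further gives $\Gnplusonetop\Gnplusonerej = \left\|\Gnplusonerej\right\|_2^2$. The one substantial identity --- the crux of the proof --- is that $\bnplusone$ is the image of $\Gnplusone$ under $(\Anpinv)^\top$:
\[
\bnplusone = (\Anpinv)^\top\Gnplusone, \qquad\text{equivalently}\qquad \Gnplusonetop\Anpinv = \bnplusonetop .
\]
This I would obtain by expanding $\bnplusone = \Bn\znplusone = \Bn\Pninv\Cntilde\Gnplusone$ and matching it against $(\Anpinv)^\top\Gnplusone = \Bn\Pninv\Cntilde\Gnplusone$, the two expressions coinciding thanks to the symmetry of $\Pninv = (\Bntop\Bn)^{-1}$.

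With these facts in hand the four equations fall out by block multiplication. For (\ref{label_eq_penrose:4}) I would compute $\Anplusonepinv\Anplusone = \Anpinv\An + \left\|\Gnplusonerej\right\|_2^{-2}\,\Gnplusonerej\Gnplusonerejtop$, manifestly symmetric (indeed the orthogonal projector onto $\operatorname{Im}(\Anplusonetop) = \operatorname{Vect}(\mathcal{B_A}) \oplus \operatorname{Vect}(\Gnplusonerej)$). For (\ref{label_eq_penrose:3}) I would expand $\Anplusone\Anplusonepinv$ as a $2\times 2$ block matrix; its only dangerous entry is the off-diagonal block $\Gnplusonetop\Anpinv - \bnplusonetop$, which vanishes \emph{precisely} by the crux identity, leaving $\left(\begin{smallmatrix}\An\Anpinv & 0 \\ 0 & 1\end{smallmatrix}\right)$, symmetric because $\An\Anpinv$ is. Equation (\ref{label_eq_penrose:1}) then reduces to $\An\Anpinv\An = \An$ on the first block and to an identity on the last row, while (\ref{label_eq_penrose:2}) follows from $\Anpinv\An\Anpinv = \Anpinv$ together with $\An\Gnplusonerej = 0$, $\Gnplusonerejtop\Anpinv = 0$ and $\Gnplusonerejtop\Gnplusonerej = \left\|\Gnplusonerej\right\|_2^2$ applied to the two summands of $\Anplusonepinv$.

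The main obstacle is thus entirely concentrated in the cross-term cancellation of step (\ref{label_eq_penrose:3}): everything depends on recognizing $\bnplusone = (\Anpinv)^\top\Gnplusone$; once that is in place the rest is routine block algebra. For the complexity count I would observe that $\znplusone$ costs $O(mr)$, $\bnplusone = \Bn\znplusone$ costs $O(nr)$ and $\Gnplusonerej$ costs $O(mr)$, so that forming the outer-product correction $\Gnplusonerej\left(-\bnplusonetop\;\;1\right)$ and adding it to $\left(\begin{smallmatrix}\Anpinv & 0\end{smallmatrix}\right)$ dominates at $O(mn)$ --- unavoidable, since the output itself has $m(n+1)$ entries.
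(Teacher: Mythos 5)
Your proposal is correct, but it takes a genuinely different route from the paper. The paper \emph{derives} the formula constructively: it builds the updated full-rank factorization $\Cnplusone = \bigl(\begin{smallmatrix}\Cn \\ \Gnplusonetop\end{smallmatrix}\bigr)$, $\Bnplusone = \bigl(\begin{smallmatrix}\Bn & 0 \\ 0 & 1\end{smallmatrix}\bigr)$, feeds it into Theorem \ref{label_th_explicit_pseudo}, and inverts the enlarged Gram matrix $\Cnplusone\Cnplusonetop$ block-wise via the Schur complement $S = \Gnplusonetop\Gnplusonerej = \left\|\Gnplusonerej\right\|_2^2$ --- exactly the ``opaque algebra'' you chose to avoid. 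You instead take the formula as an ansatz and verify the four Penrose equations, which is sound: your crux identity $\bnplusonetop = \Gnplusonetop\Anpinv$ is correct (both sides equal $\Gnplusonetop\Cntildetop\Pninv\Bntop$ by symmetry of $\Pninv$), the orthogonality relations $\Cn\Gnplusonerej = 0$, $\Gnplusonerejtop\Anpinv = 0$, $\Gnplusonetop\Gnplusonerej = \left\|\Gnplusonerej\right\|_2^2$ all hold, and the block computations you sketch do close up (I checked $\Anplusone\Anplusonepinv = \bigl(\begin{smallmatrix}\An\Anpinv & 0 \\ 0 & 1\end{smallmatrix}\bigr)$ and $\Anplusonepinv\Anplusone = \Anpinv\An + \left\|\Gnplusonerej\right\|_2^{-2}\Gnplusonerej\Gnplusonerejtop$). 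Your complexity count matches the paper's. What each approach buys: yours is shorter and makes the structural reason for correctness transparent (the cross-term cancellation), but it must be handed the formula in advance; the paper's derivation explains where the formula comes from and, more importantly, produces as a byproduct the explicit block inverse $(\Cnplusone\Cnplusonetop)^{-1}$, which is reused verbatim in the proof of Theorem \ref{label_th_update_decomp} to update $\Cnplusonetilde$ in $O(mr)$. A purely verificational proof would leave that later step without its main ingredient.
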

\begin{proof}
First, let us observe how the full-rank factorization $\Anplusone = \Bnplusone \Cnplusone$ is impacted by adding an observation $\Gnplusone$ linearly independent from previous observations. Concerning $\Cnplusone$, we have:
\begin{equation}\label{label_eq_im_Anplusone}
\begin{aligned}
	&\operatorname{Im}(\Cnplusonetop) = \operatorname{Im}(\Anplusonetop) = \operatorname{Vect}(\mathcal{B_A} \cup \{\Gnplusone\}) \neq \operatorname{Vect}(\mathcal{B_A}) = \operatorname{Im}(\Antop) = \operatorname{Im}(\Cntop)\\
	&\Rightarrow\quad \Cnplusone \neq \Cn
\end{aligned}
\end{equation}
Adding $\Gnplusone$ to the rows of $\Cn$ leads to:
\begin{equation}\label{label_eq_Cnplusone}
	\Cnplusone = \begin{pmatrix}
		\Cn \\
		\Gnplusonetop \\
	\end{pmatrix}, \quad \Bnplusone = \begin{pmatrix}
		\Bn & 0 \\
		0 & 1 \\
	\end{pmatrix}
\end{equation}
It becomes clear from this definition that $\Bnplusone$ has full column rank since $\Bn$ has full column rank, and also that $\Cnplusone$ has full row rank. Therefore, $\Bnplusone$ and $\Cnplusone$ represent an acceptable full-rank decomposition of $\Anplusone$, since we have:
\begin{equation}\label{label_eq_Anplusone_rank_decomposition_indep}
	\Bnplusone \Cnplusone = \begin{pmatrix}
		\Bn & 0 \\
		0 & 1 \\
	\end{pmatrix} \begin{pmatrix}
		\Cn \\
		\Gnplusonetop \\
	\end{pmatrix} = \begin{pmatrix}
		\Bn\Cn \\
		\Gnplusonetop \\
	\end{pmatrix} = \begin{pmatrix}
		\An \\
		\Gnplusonetop \\
	\end{pmatrix} = \Anplusone
\end{equation}
Second, we apply Theorem \ref{label_th_explicit_pseudo} to $\Anplusone$:
\begin{equation}\label{label_eq_Anplusonepinv_update_indep_proof}
\begin{aligned}
	\Anplusonepinv &= \Cnplusonetop(\Cnplusone\Cnplusonetop)^{-1}(\Bnplusonetop\Bnplusone)^{-1}\Bnplusonetop\\
	 &= \begin{pmatrix}
		\Cntop & \Gnplusone \\
	\end{pmatrix} \begin{pmatrix}
		\Cn\Cntop & \Cn\Gnplusone \\
		(\Cn\Gnplusone)^\top & \Gnplusonetop\Gnplusone \\
	\end{pmatrix}^{-1} \begin{pmatrix}
		\Bntop\Bn & 0 \\
		0 & 1 \\
	\end{pmatrix}^{-1} \begin{pmatrix}
		\Bntop & 0 \\
		0 & 1 \\
	\end{pmatrix}\\
	&= \begin{pmatrix}
		\Cntop & \Gnplusone \\
	\end{pmatrix} \begin{pmatrix}
		\Cn\Cntop & \Cn\Gnplusone \\
		(\Cn\Gnplusone)^\top & \Gnplusonetop\Gnplusone \\
	\end{pmatrix}^{-1} \begin{pmatrix}
		(\Bntop\Bn)^{-1} & 0 \\
		0 & 1 \\
	\end{pmatrix} \begin{pmatrix}
		\Bntop & 0 \\
		0 & 1 \\
	\end{pmatrix}\\
	&= \begin{pmatrix}
		\Cntop & \Gnplusone \\
	\end{pmatrix} \begin{pmatrix}
		\sigma_n & \Cn\Gnplusone \\
		(\Cn\Gnplusone)^\top & \Gnplusonetop\Gnplusone \\
	\end{pmatrix}^{-1} \begin{pmatrix}
		\Pninv\Bntop & 0 \\
		0 & 1 \\
	\end{pmatrix}
\end{aligned}
\end{equation}
Finally, we apply a generic block-wise inversion scheme:
\begin{equation}\label{label_eq_generic_block_inverse}
\begin{aligned}
\begin{pmatrix}
	D & E \\
	E^\top & F \\
\end{pmatrix}^{-1} &= \begin{pmatrix}
	D^{-1} + D^{-1}E(F-E^\top D^{-1}E)^{-1}E^\top D^{-1} & -D^{-1}E(F-E^\top D^{-1}E)^{-1} \\
	-(F-E^\top D^{-1}E)^{-1}E^\top D^{-1} & (F-E^\top D^{-1}E)^{-1} \\
\end{pmatrix} \\
	&= \begin{pmatrix}
		D^{-1} + D^{-1}ES^{-1}E^\top D^{-1} & -D^{-1}ES^{-1} \\
		-S^{-1}E^\top D^{-1} & S^{-1} \\
	\end{pmatrix}
\end{aligned}
\end{equation}
where $S = F-E^\top D^{-1}E$, $D = \sigma_n$, $E = \Cn\Gnplusone$ and $F = \Gnplusonetop\Gnplusone$, since $\sigma_n$ is non-singular and $\Gnplusonetop\Gnplusone$ is not null since $\Gnplusone \neq 0$.
This leads to the pseudoinverse formula:
\begin{equation}\label{label_eq_Anplusonepinv_update_indep_proof_final}
\begin{aligned}
	\Anplusonepinv &= \begin{pmatrix}
		\Cntop & \Gnplusone \\
	\end{pmatrix} \begin{pmatrix}
		\sigma_n & \Cn\Gnplusone \\
		(\Cn\Gnplusone)^\top & \Gnplusonetop\Gnplusone \\
	\end{pmatrix}^{-1} \begin{pmatrix}
		\Pninv\Bntop & 0 \\
		0 & 1 \\
	\end{pmatrix} \\
	&= \begin{pmatrix}
		\Cntop & \Gnplusone \\
	\end{pmatrix} \begin{pmatrix}
		\sigma_n^{-1} + \sigma_n^{-1}\Cn\Gnplusone S^{-1}(\Cn\Gnplusone)^\top\sigma_n^{-1} & -\sigma_n^{-1}\Cn\Gnplusone S^{-1} \\
		-S^{-1}(\Cn\Gnplusone)^\top\sigma_n^{-1} & S^{-1} \\
	\end{pmatrix} \begin{pmatrix}
		\Pninv\Bntop & 0 \\
		0 & 1 \\
	\end{pmatrix}\\
	&= \begin{pmatrix}
		\Cntop & \Gnplusone \\
	\end{pmatrix} \begin{pmatrix}
		\sigma_n^{-1} + \gnplusoneproj S^{-1}\gnplusoneprojtop & -S^{-1}\gnplusoneproj \\
		-S^{-1}\gnplusoneprojtop & S^{-1} \\
	\end{pmatrix} \begin{pmatrix}
		\Pninv\Bntop & 0 \\
		0 & 1 \\
	\end{pmatrix}\\
	&= \begin{pmatrix}
		\Cntop & \Gnplusone \\
	\end{pmatrix} \left( \begin{pmatrix}
		\sigma_n^{-1} & 0 \\
		0 & 0 \\
	\end{pmatrix} + S^{-1}\begin{pmatrix}
		\gnplusoneproj\gnplusoneprojtop & -\gnplusoneproj \\
		-\gnplusoneprojtop & 1 \\
	\end{pmatrix} \right) \begin{pmatrix}
		\Pninv\Bntop & 0 \\
		0 & 1 \\
	\end{pmatrix}\\
	&= \begin{pmatrix}
		\Anpinv & 0 \\
	\end{pmatrix} + S^{-1} \begin{pmatrix}
		\Cntop & \Gnplusone \\
	\end{pmatrix} \begin{pmatrix}
		\gnplusoneproj\gnplusoneprojtop & -\gnplusoneproj \\
		-\gnplusoneprojtop & 1 \\
	\end{pmatrix} \begin{pmatrix}
		\Pninv\Bntop & 0 \\
		0 & 1 \\
	\end{pmatrix}\\
	&= \begin{pmatrix}
		\Anpinv & 0 \\
	\end{pmatrix} + S^{-1} \begin{pmatrix}
		\Cntop\gnplusoneproj\gnplusoneprojtop\Pninv\Bntop - \Gnplusone\gnplusoneprojtop\Pninv\Bntop & - \Cntop\gnplusoneproj + \Gnplusone \\
	\end{pmatrix}\\
	&= \begin{pmatrix}
		\Anpinv & 0 \\
	\end{pmatrix} + S^{-1} \begin{pmatrix}
		-\Gnplusonerej\bnplusonetop & \Gnplusonerej \\
	\end{pmatrix}\\
\end{aligned}
\end{equation}
where the Schur complement $S$ of $\sigma_n$ is written as:
\begin{equation}\label{label_eq_schur_complement}
\begin{aligned}
S &= \Gnplusonetop\Gnplusone - \Gnplusonetop\Cntop\sigma_n^{-1}\Cn\Gnplusone\\
	&= \Gnplusonetop\Gnplusone - \Gnplusonetop\Gnplusoneproj\\
	&= \Gnplusonetop\Gnplusonerej\\
	&= \Gnplusonerejtop\Gnplusonerej = \left\|\Gnplusonerej\right\|_2^2
\end{aligned}
\end{equation}
$S$ is, therefore, the square of the norm of the component of $\Gnplusone$ along the orthogonal complement of $\operatorname{Im}(\Antop)$. $S$ is invertible since $\Gnplusonerej \neq 0$.

$\Gnplusonerej$ and $\bnplusone$ can be computed in $O(\max(m,n)r)$ if $\Cn$, $\Cntilde$, $\Bn$ and $\Pninv$ are already known. Therefore, the time complexity bottleneck is the outer product $\Gnplusonerej\bnplusonetop$, leading to a total update in $O(mn)$.
\end{proof}

\begin{theorem}\label{label_th_pinv_lindep}
If $\Gnplusone$ is a linear combination of previous observations, the pseudoinverse $\Anplusonepinv$ of $\Anplusone$ can be updated in $O(mn)$ if $\Anpinv$, $\Bn$, $\Pninv$, $\Cn$ and $\Cntilde$ are known. An explicit formula is then given by:
\begin{equation}
\Anplusonepinv = \begin{pmatrix}
		\Anpinv & 0 \\
	\end{pmatrix} + \frac{\Cntildetop \znplusone}{1+\gnplusoneprojtop\znplusone} \begin{pmatrix}
		-\bnplusonetop & 1\\
	\end{pmatrix}\\
\end{equation}
\end{theorem}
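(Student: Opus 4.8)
The plan is to mirror the proof of Theorem~\ref{label_th_pinv_indep}, but now exploiting that a linearly dependent $\Gnplusone$ leaves the rank unchanged, so that $\Cn$ is the factor that stays fixed while $\Bn$ grows. First I would fix a convenient full-rank factorization of $\Anplusone$. Since linear dependence means $\Gnplusonerej = 0$, we have $\Gnplusone = \Gnplusoneproj = \Cntop\gnplusoneproj$, hence $\Gnplusonetop = \gnplusoneprojtop\Cn$ already lies in the span of the rows of $\Cn$. Consequently $\operatorname{Im}(\Cnplusonetop) = \operatorname{Im}(\Antop) = \operatorname{Im}(\Cntop)$, and I would keep $\Cnplusone = \Cn$ (so that $\Cnplusonetildetop = \Cntildetop$) while appending the coordinate row to $\Bn$:
\begin{equation}
\Cnplusone = \Cn, \qquad \Bnplusone = \begin{pmatrix} \Bn \\ \gnplusoneprojtop \end{pmatrix}.
\end{equation}
As in Eq.~\ref{label_eq_Anplusone_rank_decomposition_indep}, this is an acceptable full-rank factorization: the product reproduces $\Anplusone$ because $\gnplusoneprojtop\Cn = \Gnplusonetop$, and $\Bnplusone$ retains full column rank $r$ (appending rows cannot lower a column rank that is already maximal), while $\Cnplusone = \Cn$ keeps full row rank.

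Next I would apply Theorem~\ref{label_th_explicit_pseudo} in the compact form $\Anplusonepinv = \Cnplusonetildetop\Pnplusoneinv\Bnplusonetop$. Because $\Cnplusone = \Cn$, the left factor is unchanged and the whole update is carried by the right-hand factors. Here $\Bnplusonetop = \begin{pmatrix} \Bntop & \gnplusoneproj \end{pmatrix}$ and
\begin{equation}
\Pnplusone = \Bnplusonetop\Bnplusone = \Bntop\Bn + \gnplusoneproj\gnplusoneprojtop = \Pn + \gnplusoneproj\gnplusoneprojtop,
\end{equation}
a rank-one perturbation of $\Pn$. The central step is to invert it with the Sherman--Morrison formula,
\begin{equation}
\Pnplusoneinv = \Pninv - \frac{\Pninv\gnplusoneproj\,\gnplusoneprojtop\Pninv}{1+\gnplusoneprojtop\Pninv\gnplusoneproj} = \Pninv - \frac{\znplusone\znplusone^\top}{1+\gnplusoneprojtop\znplusone},
\end{equation}
using $\znplusone = \Pninv\gnplusoneproj$ and the symmetry of $\Pninv$. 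Invertibility is free here: $\Pn = \Bntop\Bn$ is positive definite, so $\gnplusoneprojtop\znplusone \geq 0$ and the denominator is at least $1$.

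The rest is bookkeeping: I would expand $\Anplusonepinv = \Cntildetop\Pnplusoneinv\begin{pmatrix} \Bntop & \gnplusoneproj \end{pmatrix}$ block by block. For the trailing column one gets the pleasant collapse $\Pnplusoneinv\gnplusoneproj = \znplusone/(1+\gnplusoneprojtop\znplusone)$, obtained by substituting the Sherman--Morrison expression and gathering the scalar $\gnplusoneprojtop\znplusone$. For the first block, the leading term $\Cntildetop\Pninv\Bntop$ is exactly $\Anpinv$, and the correction simplifies, via $\znplusone^\top\Bntop = (\Bn\znplusone)^\top = \bnplusonetop$, to $-\Cntildetop\znplusone\bnplusonetop/(1+\gnplusoneprojtop\znplusone)$. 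Factoring the common vector $\Cntildetop\znplusone/(1+\gnplusoneprojtop\znplusone)$ out of both blocks then assembles them into the claimed rank-one correction $\begin{pmatrix} -\bnplusonetop & 1 \end{pmatrix}$.

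The main obstacle I anticipate is not a deep idea but the sign and normalization bookkeeping of the last step: one must check that the same scalar denominator $1+\gnplusoneprojtop\znplusone$ appears in both the $m\times n$ block and the trailing column, so that a single factor can be pulled out cleanly to yield the compact statement. For the complexity claim, $\bnplusone = \Bn\znplusone$ costs $O(\max(m,n)r)$, and the outer product $(\Cntildetop\znplusone)\bnplusonetop$ is the $O(mn)$ bottleneck, matching the stated bound.
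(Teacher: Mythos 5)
Your proposal is correct and follows essentially the same route as the paper's own proof: the same factorization $\Cnplusone = \Cn$, $\Bnplusone = \begin{pmatrix}\Bn\\\gnplusoneprojtop\end{pmatrix}$, the same application of Theorem \ref{label_th_explicit_pseudo}, Sherman--Morrison on $\Pn + \gnplusoneproj\gnplusoneprojtop$, and the same complexity bottleneck. Your positive-definiteness argument for the denominator $1+\gnplusoneprojtop\znplusone \geq 1$ is a slightly more direct justification of invertibility than the paper's rank-based footnote, but the proofs are otherwise identical in substance.
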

\begin{proof}
First, let us observe how the full-rank factorization $\Anplusone = \Bnplusone \Cnplusone$ is impacted by adding an observation $\Gnplusone$ that is a linear combination of previous observations. Notice that:
\begin{equation}\label{label_eq_Gnplusone_lin_dep}
\begin{aligned}
\Gnplusone &= P_\mathcal{B_A}\Gnplusone + \Gnplusonerej = P_\mathcal{B_A}\Gnplusone =  \Gnplusoneproj\\
 &= \Cntop\Cntilde\Gnplusone\\
 &= \Cntop\gnplusoneproj
\end{aligned}
\end{equation}
Since $\Gnplusone \in \operatorname{Vect}(\Gamma_1,\,\ldots,\,\Gamma_n) = \operatorname{Vect}(\mathcal{B_A})$, $\mathcal{B_A}$ is still a basis for $\operatorname{Vect}(\Gamma_1,\,\ldots,\,\Gn,\Gnplusone)$. As a consequence, we can take $\Cnplusone = \Cn$, leading to:
\begin{equation}\label{label_eq_factorization_lindep}
	\Cnplusone = \Cn, \quad \Bnplusone = \begin{pmatrix}
		\Bn\\
		\gnplusoneprojtop\\
	\end{pmatrix}
\end{equation}
From this definition follows that $\Cnplusone$ still has full row rank, and also that $\Bnplusone$ has full column rank since $\Bn$ has full column rank. Therefore, $\Bnplusone$ and $\Cnplusone$ represent an acceptable full-rank decomposition of $\Anplusone$, since we have:
\begin{equation}\label{label_eq_Anplusone_rank_decomposition_lindep}
	\Bnplusone \Cnplusone = \begin{pmatrix}
		\Bn\\
		\gnplusoneprojtop\\
	\end{pmatrix}\Cn = \begin{pmatrix}
		\Bn\Cn \\
		\gnplusoneprojtop\Cn \\
	\end{pmatrix} = \begin{pmatrix}
		\An \\
		\Gnplusonetop \\
	\end{pmatrix} = \Anplusone
\end{equation}
Second, we apply Theorem \ref{label_th_explicit_pseudo} to $\Anplusone$:
\begin{equation}\label{label_eq_Anplusonepinv_update_lindep_proof}
\begin{aligned}
	\Anplusonepinv &= \Cnplusonetop(\Cnplusone\Cnplusonetop)^{-1}(\Bnplusonetop\Bnplusone)^{-1}\Bnplusonetop\\
	&= \Cntop(\Cn\Cntop)^{-1} \left(\begin{pmatrix}
			\Bntop & \gnplusoneproj\\
		\end{pmatrix}\begin{pmatrix}
			\Bn\\
			\gnplusoneprojtop\\
		\end{pmatrix}\right)^{-1} \begin{pmatrix}
		\Bntop & \gnplusoneproj\\
	\end{pmatrix}\\
	&= \Cntop(\Cn\Cntop)^{-1} \left(\Bntop\Bn + \gnplusoneproj\gnplusoneprojtop\right)^{-1} \begin{pmatrix}
		\Bntop & \gnplusoneproj\\
	\end{pmatrix}\\
	&= \Cntildetop \left(\Pn + \gnplusoneproj\gnplusoneprojtop\right)^{-1} \begin{pmatrix}
		\Bntop & \gnplusoneproj\\
	\end{pmatrix}
\end{aligned}
\end{equation}
Finally, we apply the Sherman-Morrison formula stating that for any non-singular matrix $G \in \mathbb{R}^{n\times n}$ and any vector $v \in \mathbb{R}^{n}$, if $G + vv^\top$ is non-singular, then:
\begin{equation}\label{label_eq_sherman_morrison}
\add{\left(G + vv^\top\right)^{-1}} = G^{-1} - \frac{G^{-1}vv^\top G^{-1}}{1+v^\top G^{-1}v}
\end{equation}
with $G = \Pn$ and $v = \gnplusoneproj$, since $\Pn$ and $\Pnplusone = \Bnplusone\Bnplusonetop$ are non-singular\footnote{$\Pnplusone = \Bnplusone\Bnplusonetop$ is non-singular, since $\Pnplusone$ is square with full rank. Indeed, $\operatorname{rank}(\Pnplusone) = \operatorname{rank}(\Bnplusone) = r$, using theorem 5.5.4 of \cite{mirsky1990introduction}}.
This leads to the pseudoinverse formula:
\begin{equation}\label{label_eq_Anplusonepinv_update_lindep_proof_final}
\begin{aligned}
	\Anplusonepinv &= \Cntildetop \left(\Pn + \gnplusoneproj\gnplusoneprojtop\right)^{-1} \begin{pmatrix}
		\Bntop & \gnplusoneproj\\
	\end{pmatrix}\\
	&= \Cntildetop \left(\Pninv - \frac{\Pninv\gnplusoneproj\gnplusoneprojtop\Pninv}{1+\gnplusoneprojtop\Pninv\gnplusoneproj}\right) \begin{pmatrix}
		\Bntop & \gnplusoneproj\\
	\end{pmatrix}\\
	&= \Cntildetop \left(\Pninv - \frac{\znplusone\znplusonetop}{1+\gnplusoneprojtop\znplusone}\right) \begin{pmatrix}
		\Bntop & \gnplusoneproj\\
	\end{pmatrix}\\
	&= \begin{pmatrix}
		\Anpinv & \Cntildetop\znplusone \\
	\end{pmatrix} - \Cntildetop \frac{\znplusone\znplusonetop}{1+\gnplusoneprojtop\znplusone} \begin{pmatrix}
		\Bntop & \gnplusoneproj\\
	\end{pmatrix}\\
	&= \begin{pmatrix}
		\Anpinv & \Cntildetop\znplusone \\
	\end{pmatrix} - \frac{\Cntildetop \znplusone}{1+\gnplusoneprojtop\znplusone} \begin{pmatrix}
		\bnplusonetop & \znplusonetop\gnplusoneproj\\
	\end{pmatrix}\\
	&= \begin{pmatrix}
		\Anpinv & 0 \\
	\end{pmatrix} + \frac{\Cntildetop \znplusone}{1+\gnplusoneprojtop\znplusone} \begin{pmatrix}
		-\bnplusonetop & 1 +\gnplusoneprojtop\znplusone - \znplusonetop\gnplusoneproj\\
	\end{pmatrix}\\
	&= \begin{pmatrix}
		\Anpinv & 0 \\
	\end{pmatrix} + \frac{\Cntildetop \znplusone}{1+\gnplusoneprojtop\znplusone} \begin{pmatrix}
		-\bnplusonetop & 1\\
	\end{pmatrix}\\
\end{aligned}
\end{equation}
$\gnplusoneproj$ , $\znplusone$ and $\bnplusone$ can be computed in $O(\max(m,n)r)$ if $\Cntilde$, $\Bn$ and $\Pninv$ are already known. Therefore, the time complexity bottleneck is the outer product $\left(\Cntildetop \znplusone\right)\bnplusonetop$, leading to a total update in $O(mn)$.
\end{proof}

\begin{corollary}
For any observation $\Gnplusone \in \mathbb{R}^m$, the pseudoinverse $\Anplusonepinv$ of $\Anplusone$ can be updated in $\Theta(mn)$ if $\Anpinv$, $\Bn$, $\Pninv$, $\Cn$ and $\Cntilde$ are known.
\end{corollary}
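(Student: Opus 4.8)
The plan is to establish the two bounds implied by $\Theta(mn)$ separately: an $O(mn)$ upper bound obtained directly from Theorems~\ref{label_th_pinv_indep} and~\ref{label_th_pinv_lindep}, and a matching $\Omega(mn)$ lower bound coming from the generic density of the update.

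For the upper bound, I would first observe that every observation $\Gnplusone \in \mathbb{R}^m$ falls into exactly one of two mutually exclusive and exhaustive cases: either $\Gnplusone$ is linearly independent from the previous observations (which already forces $\Gnplusone \neq 0$, since the zero vector is always dependent), or it is a linear combination of them (a class that also absorbs $\Gnplusone = 0$ as the trivial combination). These are precisely the hypotheses of Theorems~\ref{label_th_pinv_indep} and~\ref{label_th_pinv_lindep}, each of which already delivers the update in $O(mn)$ under the stated knowledge of $\Anpinv$, $\Bn$, $\Pninv$, $\Cn$ and $\Cntilde$. It then remains only to argue that one can decide cheaply which theorem to invoke: as noted after Eq.~\eqref{label_eq_schur_complement}, the rejection vector $\Gnplusonerej$ can be formed in $O(mr)$ from the known $\Cn$ and $\Cntilde$, and $\Gnplusonerej = 0$ holds if and only if $\Gnplusone$ is linearly dependent on the previous observations. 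Since $r \leq n$, this test costs $O(mr) \subseteq O(mn)$ and does not affect the overall bound; the corner case $\Gnplusone = 0$ is handled by Theorem~\ref{label_th_pinv_lindep}, in which $\gnplusoneproj$, $\znplusone$ and $\bnplusone$ all vanish and the formula correctly reduces to appending a single zero column to $\Anpinv$.

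For the lower bound, the key remark is that in both theorems the update has the common shape $\Anplusonepinv = \begin{pmatrix}\Anpinv & 0\end{pmatrix} + u\,v^\top$, a rank-one correction with $u \in \mathbb{R}^m$ and $v \in \mathbb{R}^{n+1}$: in the independent case $u = \Gnplusonerej / \|\Gnplusonerej\|_2^2$ and $v^\top = \begin{pmatrix}-\bnplusonetop & 1\end{pmatrix}$, while in the dependent case $u = \Cntildetop \znplusone / (1+\gnplusoneprojtop\znplusone)$ with $v$ of the same form. I would then argue that for generic $\Gnplusone$ both $u$ and $v$ have all-nonzero entries, so the correction $u\,v^\top$ is entrywise nonzero in all $m(n+1) = \Theta(mn)$ positions. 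Consequently the explicit matrix $\Anplusonepinv$ differs from the embedded predecessor $\begin{pmatrix}\Anpinv & 0\end{pmatrix}$ in $\Omega(mn)$ entries, and any algorithm that outputs the updated pseudoinverse explicitly must write at least that many values. Combined with the upper bound, this yields $\Theta(mn)$.

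The main obstacle I anticipate is making the lower bound rigorous rather than merely plausible: one must exhibit (or characterize) a family of inputs for which no cancellation collapses the rank-one correction, so that the $\Theta(mn)$ nonzero entries genuinely materialize and cannot be represented more succinctly under the explicit-output model. The upper bound, by contrast, is essentially bookkeeping once the case split is seen to be exhaustive and the $O(mr)$ detectability of $\Gnplusonerej$ is in place.
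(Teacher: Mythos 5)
Your proposal is correct and follows essentially the same route as the paper: the upper bound is read off from Theorems~\ref{label_th_pinv_indep} and~\ref{label_th_pinv_lindep} after an exhaustive case split, and the lower bound comes from observing that the rank-one correction generically touches all $mn$ entries. The concrete witness you flag as the remaining obstacle is supplied by the paper in a footnote, namely $\An = I_n$ with $\Gnplusone = (1,\dots,1)^\top$, for which the correction is entrywise nonzero.
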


Indeed, at least $n\times m$ terms of the pseudoinverse need to be updated when adding a new observation, in the general case\footnote{To be convinced, consider $\An = I_n$ the identity matrix and $\Gnplusone = \begin{pmatrix}
	1\\
	\vdots\\
	1\\
\end{pmatrix}$.}. Therefore, the pseudoinverse update has a fundamental cost component that cannot be improved, hence the $\Theta(mn)$ complexity. This limitation is not present in the recursive least square problem. In this problem, we are only interested in updating the least square solution $\xnplusone$ when adding a new observation $\Gnplusone$ with associated target $\ynplusone$:
\begin{equation}\label{label_eq_RLS_xnplusone}
\xnplusone = \begin{pmatrix}
\An\\
\Gnplusonetop\\
\end{pmatrix}^{+} \begin{pmatrix}
	\Yn\\
	\ynplusone\\
\end{pmatrix} = \Anplusonepinv \begin{pmatrix}
	\Yn\\
	\ynplusone\\
\end{pmatrix}
\end{equation}

\begin{theorem}\label{label_th_xnplusone_indep}
If $\Gnplusone \neq 0$ is linearly independent from previous observations, the least square solution $\xnplusone$ can be updated in $O(mr)$ if $\xn$, $\Cn$ and $\Cntilde$ are known. An explicit formula (in the form of equation \ref{label_eq_recursive_least_squares}) is then given by:
\begin{equation}
\xnplusone = \xn + \frac{\Gnplusonerej}{\left\|\Gnplusonerej\right\|_2^2} \left(\ynplusone - \Gnplusonetop\xn\right)
\end{equation}
\end{theorem}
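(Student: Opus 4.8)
The plan is to read off $\xnplusone$ directly from the recursive least-squares definition (Eq.~\ref{label_eq_RLS_xnplusone}) by substituting the explicit pseudoinverse update just established in Theorem~\ref{label_th_pinv_indep}, and then to check that the scalar coefficient that emerges is exactly the \emph{a priori} residual $\ynplusone - \Gnplusonetop\xn$. First I would write
\[
\xnplusone = \Anplusonepinv \begin{pmatrix} \Yn \\ \ynplusone \end{pmatrix} = \left[ \begin{pmatrix} \Anpinv & 0 \end{pmatrix} + \frac{\Gnplusonerej}{\left\|\Gnplusonerej\right\|_2^2} \begin{pmatrix} -\bnplusonetop & 1 \end{pmatrix} \right] \begin{pmatrix} \Yn \\ \ynplusone \end{pmatrix}.
\]
The first block collapses immediately to $\Anpinv \Yn = \xn$ by Eq.~\ref{label_eq_least_squares_pseudoinverse}, while the second block yields the vector $\Gnplusonerej / \left\|\Gnplusonerej\right\|_2^2$ scaled by the scalar $\ynplusone - \bnplusonetop \Yn$.

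The crux is then to identify $\bnplusonetop \Yn$ with $\Gnplusonetop \xn$. Using $\bnplusone = \Bn\znplusone$, $\znplusone = \Pninv\gnplusoneproj$ and $\gnplusoneproj = \Cntilde\Gnplusone$ together with the symmetry of $\Pninv$, I would rewrite
\[
\bnplusonetop \Yn = (\Cntilde \Gnplusone)^\top \Pninv \Bntop \Yn = \Gnplusonetop \left( \Cntildetop \Pninv \Bntop \right) \Yn = \Gnplusonetop \Anpinv \Yn = \Gnplusonetop \xn,
\]
where the parenthesised factor is recognised as $\Anpinv = \Cntildetop \Pninv \Bntop$. Substituting back turns the scalar coefficient into $\ynplusone - \Gnplusonetop\xn$ and produces the claimed formula.

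The main obstacle here is not the algebra, which is a short chain of substitutions, but the complexity bookkeeping that is the whole motivation for the theorem. A literal evaluation of the formula would require $\bnplusone$, hence $\Bn$, and would cost $O(\max(m,n)r)$ — the same bottleneck as the pseudoinverse update. The identity $\bnplusonetop \Yn = \Gnplusonetop\xn$ is precisely what removes the dependence on $\Bn$, $\Pninv$ and $\Yn$, replacing the expensive inner product by $\Gnplusonetop\xn$, computable in $O(m)$ from $\xn$ alone. Since $\Gnplusonerej$ is obtainable in $O(mr)$ from $\Cn$ and $\Cntilde$ (as noted after Eq.~\ref{label_eq_Gnplusoneproj}), and the final scaled vector addition costs $O(m)$, the overall update is $O(mr)$, with no need to maintain $\Bn$, $\Pninv$, or the full pseudoinverse. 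I would close by emphasising that this is what distinguishes the RLS update from the pseudoinverse update of Theorem~\ref{label_th_pinv_indep}.
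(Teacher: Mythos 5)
Your proposal is correct and follows essentially the same route as the paper: substitute the pseudoinverse update of Theorem \ref{label_th_pinv_indep} into $\xnplusone = \Anplusonepinv \bigl(\begin{smallmatrix}\Yn\\ \ynplusone\end{smallmatrix}\bigr)$, then identify $\bnplusonetop \Yn = \Gnplusonetop\Cntildetop\Pninv\Bntop\Yn = \Gnplusonetop\Anpinv\Yn = \Gnplusonetop\xn$ to eliminate the dependence on $\Bn$, $\Pninv$ and $\Yn$. Your closing observation that this identification is what brings the cost down to the $O(mr)$ computation of $\Gnplusonerej$ matches the paper's own complexity argument.
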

\begin{proof}
First, let us inject theorem \ref{label_th_pinv_indep} into the definition of $\xnplusone$:
\begin{equation}\label{label_eq_xnplusone_indep}
\begin{aligned}
	\xnplusone &= \Anplusonepinv \begin{pmatrix}
		\Yn\\
		\ynplusone\\
	\end{pmatrix}\\
	&= \Anpinv \Yn + \frac{\Gnplusonerej}{\left\|\Gnplusonerej\right\|_2^2} \begin{pmatrix}
		-\bnplusonetop & 1 \\
	\end{pmatrix} \begin{pmatrix}
		\Yn\\
		\ynplusone\\
	\end{pmatrix}\\
	&= \xn + \frac{\Gnplusonerej}{\left\|\Gnplusonerej\right\|_2^2}\left(\ynplusone -\bnplusonetop \Yn\right)\\
\end{aligned}
\end{equation}
Let us simplify further this equation by recognizing $\bnplusonetop \Yn$ as the fitted target associated with $\Gnplusone$:
\begin{equation}\label{label_eq_xnplusone_indep_final}
\begin{aligned}
	\xnplusone &= \xn + \frac{\Gnplusonerej}{\left\|\Gnplusonerej\right\|_2^2} \left(\ynplusone -\Gnplusonetop\Cntildetop \Pninv \Bntop \Yn\right)\\
	&= \xn + \frac{\Gnplusonerej}{\left\|\Gnplusonerej\right\|_2^2} \left(\ynplusone -\Gnplusonetop\Anpinv \Yn\right)\\
	&= \xn + \frac{\Gnplusonerej}{\left\|\Gnplusonerej\right\|_2^2} \left(\ynplusone -\Gnplusonetop\xn\right)\\
	&= \xn + \frac{\Gnplusonerej}{\left\|\Gnplusonerej\right\|_2^2}\ \Delta\ynplusone\\
\end{aligned}
\end{equation}
where $\Delta\ynplusone = \ynplusone - \Gnplusonetop\xn$ is the difference between the expected/fitted target (i.e. $\Gnplusone\xntop$) and the real target $\ynplusone$ associated with the new observation $\Gnplusone$ (i.e. the predicted residual, or \textit{a priori} error). We identify $\frac{\Gnplusonerej}{\left\|\Gnplusonerej\right\|_2^2}$ to be the associated Kalman gain vector in this case\cite{bjorck1996numerical}.

$\Gnplusonerej$ can be computed in $O(mr)$ if $\Cn$ and $\Cntilde$ are already known, which is the time complexity bottleneck of the whole update step.
\end{proof}

\begin{theorem}\label{label_th_xnplusone_lindep}
If $\Gnplusone$ is a linear combination of previous observations, the least square solution $\xnplusone$ can be updated in $O(mr)$ if $\xn$, $\Cn$, $\Cntilde$ and $\Pninv$ are known. An explicit formula (in the form of equation \ref{label_eq_recursive_least_squares}) is then given by:
\begin{equation}
\xnplusone = \xn + \frac{\Cntildetop \znplusone}{1+\gnplusoneprojtop\znplusone} \left(\ynplusone - \Gnplusonetop\xn\right)\\
\end{equation}
\end{theorem}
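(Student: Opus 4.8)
The plan is to reproduce, almost verbatim, the proof of Theorem \ref{label_th_xnplusone_indep}, now substituting the linearly-dependent pseudoinverse update of Theorem \ref{label_th_pinv_lindep} in place of the independent one. First I would inject that update into the definition \ref{label_eq_RLS_xnplusone} of $\xnplusone = \Anplusonepinv \begin{pmatrix}\Yn\\\ynplusone\end{pmatrix}$. Since $\Anplusonepinv = \begin{pmatrix}\Anpinv & 0\end{pmatrix} + \frac{\Cntildetop\znplusone}{1+\gnplusoneprojtop\znplusone}\begin{pmatrix}-\bnplusonetop & 1\end{pmatrix}$, the leading block contributes $\Anpinv\Yn = \xn$ and the rank-one term contributes $\frac{\Cntildetop\znplusone}{1+\gnplusoneprojtop\znplusone}\left(\ynplusone - \bnplusonetop\Yn\right)$, giving
\begin{equation*}
\xnplusone = \xn + \frac{\Cntildetop\znplusone}{1+\gnplusoneprojtop\znplusone}\left(\ynplusone - \bnplusonetop\Yn\right).
\end{equation*}

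Next I would identify $\bnplusonetop\Yn$ with the fitted target $\Gnplusonetop\xn$, exactly as in the independent case. Unwinding the definitions $\bnplusone = \Bn\znplusone$, $\znplusone = \Pninv\gnplusoneproj$ and $\gnplusoneproj = \Cntilde\Gnplusone$ (equation \ref{label_eq_gnplusoneproj}), and using that $\Pninv = (\Bntop\Bn)^{-1}$ is symmetric, I obtain $\bnplusonetop = \Gnplusonetop\Cntildetop\Pninv\Bntop = \Gnplusonetop\Anpinv$, where the last equality is the factorization $\Anpinv = \Cntildetop\Pninv\Bntop$. Hence $\bnplusonetop\Yn = \Gnplusonetop\Anpinv\Yn = \Gnplusonetop\xn$, so that $\ynplusone - \bnplusonetop\Yn$ becomes the predicted residual $\ynplusone - \Gnplusonetop\xn$ of the target RLS form \ref{label_eq_recursive_least_squares}, with Kalman gain $\frac{\Cntildetop\znplusone}{1+\gnplusoneprojtop\znplusone}$. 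This yields the announced formula.

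It is worth noting that the identity $\bnplusonetop = \Gnplusonetop\Anpinv$ is in fact independent of the linear-dependence hypothesis: the hypothesis only enters through the choice of pseudoinverse update (Theorem \ref{label_th_pinv_lindep} rather than Theorem \ref{label_th_pinv_indep}), which is precisely what makes the gain here $\frac{\Cntildetop\znplusone}{1+\gnplusoneprojtop\znplusone}$ instead of $\frac{\Gnplusonerej}{\left\|\Gnplusonerej\right\|_2^2}$. I do not expect a serious obstacle; the only point requiring care is the chain $\bnplusonetop = \Gnplusonetop\Cntildetop\Pninv\Bntop = \Gnplusonetop\Anpinv$, which rests on the symmetry of $\Pninv$ and on the previously established factorization of $\Anpinv$. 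For the complexity bound, I would observe that $\gnplusoneproj = \Cntilde\Gnplusone$, $\znplusone = \Pninv\gnplusoneproj$, the scalar $1+\gnplusoneprojtop\znplusone$, the vector $\Cntildetop\znplusone$ and the scalar $\Gnplusonetop\xn$ are each computable in $O(mr)$ once $\xn$, $\Cn$, $\Cntilde$ and $\Pninv$ are known, so the whole update is $O(mr)$.
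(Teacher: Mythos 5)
Your proposal is correct and follows essentially the same route as the paper: inject the pseudoinverse update of Theorem \ref{label_th_pinv_lindep} into equation \ref{label_eq_RLS_xnplusone}, recognize $\bnplusonetop\Yn = \Gnplusonetop\Anpinv\Yn = \Gnplusonetop\xn$ via the symmetry of $\Pninv$ and the factorization $\Anpinv = \Cntildetop\Pninv\Bntop$, and bound the cost by the $O(mr)$ computations of $\gnplusoneproj$ and $\znplusone$. You merely spell out the simplification of $\bnplusonetop\Yn$ that the paper delegates to its proof of Theorem \ref{label_th_xnplusone_indep}.
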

\begin{proof}
Let us proceed similarly to theorem \ref{label_th_xnplusone_indep}, by injecting theorem \ref{label_th_pinv_lindep} into the definition of $\xnplusone$:
\begin{equation}\label{label_eq_xnplusone_lindep}
\begin{aligned}
	\xnplusone &= \Anplusonepinv \begin{pmatrix}
		\Yn\\
		\ynplusone\\
	\end{pmatrix}\\
	&= \Anpinv \Yn + \frac{\Cntildetop \znplusone}{1+\gnplusoneprojtop\znplusone} \begin{pmatrix}
		-\bnplusonetop & 1\\
	\end{pmatrix} \begin{pmatrix}
		\Yn\\
		\ynplusone\\
	\end{pmatrix}\\
	&= \xn + \frac{\Cntildetop \znplusone}{1+\gnplusoneprojtop\znplusone} \left(\ynplusone -\bnplusonetop \Yn\right)\\
	&= \xn + \frac{\Cntildetop \znplusone}{1+\gnplusoneprojtop\znplusone} \left(\ynplusone - \Gnplusonetop\xn\right)\\
	&= \xn + \frac{\Cntildetop \znplusone}{1+\gnplusoneprojtop\znplusone} \Delta\ynplusone\\
\end{aligned}
\end{equation}
where $\Delta\ynplusone = \ynplusone - \Gnplusone\xntop$ (i.e. the predicted residual, or \textit{a priori} error). We identify $\frac{\Cntildetop \znplusone}{1+\gnplusoneprojtop\znplusone}$ to be the associated Kalman gain vector in this case.

$\gnplusoneproj$ and $\znplusone$ can be computed in $O(mr)$ if $\Cn$, $\Cntilde$ and $\Pninv$ are already known. The whole update step can then be performed in $O(mr)$ operations.
\end{proof}

\begin{theorem}\label{label_th_update_decomp}
For any new observation $\Gnplusone \in \mathbb{R}^m$, the matrices $\Cnplusone$, $\Cnplusonetilde$ and $\Pnplusoneinv$ can be updated in $O(mr)$ if $\Cn$, $\Cntilde$ and $\Pninv$ are already known.
\end{theorem}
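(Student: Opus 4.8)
The plan is to split the argument along exactly the dichotomy already used in Theorems \ref{label_th_pinv_indep} and \ref{label_th_pinv_lindep}, i.e. according to whether $\Gnplusone$ is linearly independent from the previous observations or not, and to reuse the explicit factorizations established in equations \ref{label_eq_Cnplusone} and \ref{label_eq_factorization_lindep}. In both cases I would first form $\gnplusoneproj = \Cntilde\Gnplusone$ and the rejection $\Gnplusonerej = \Gnplusone - \Cntop\gnplusoneproj$, each in $O(mr)$, since these two quantities drive every subsequent update and are already available from the least-squares step.

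\textbf{Dependent case.} Here $\Cnplusone = \Cn$ and therefore $\sigma_{n+1} = \Cnplusone\Cnplusonetop = \sigma_n$, so both $\Cnplusone$ and $\Cnplusonetilde = \sigma_{n+1}^{-1}\Cnplusone = \Cntilde$ are literally unchanged and cost nothing. The only nontrivial object is $\Pnplusoneinv$: using $\Bnplusone = \begin{pmatrix}\Bn \\ \gnplusoneprojtop\end{pmatrix}$ gives $\Pnplusone = \Pn + \gnplusoneproj\gnplusoneprojtop$, a rank-one perturbation whose inverse is precisely the Sherman--Morrison expression \ref{label_eq_sherman_morrison} already produced inside the proof of Theorem \ref{label_th_pinv_lindep}, namely $\Pnplusoneinv = \Pninv - \frac{\znplusone\znplusonetop}{1+\gnplusoneprojtop\znplusone}$ with $\znplusone = \Pninv\gnplusoneproj$. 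This is computable in $O(r^2)$, and since $r \le m$ we have $O(r^2)\subseteq O(mr)$.

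\textbf{Independent case.} Here the rank grows to $r+1$. By \ref{label_eq_Cnplusone}, $\Cnplusone$ is obtained by simply appending the row $\Gnplusonetop$, and $\Pnplusone = \begin{pmatrix}\Pn & 0\\ 0 & 1\end{pmatrix}$ is block diagonal, so $\Pnplusoneinv = \begin{pmatrix}\Pninv & 0\\ 0 & 1\end{pmatrix}$ is free. The crux — and the step I expect to be the main obstacle — is $\Cnplusonetilde = \sigma_{n+1}^{-1}\Cnplusone$, because $\sigma_{n+1}$ is a freshly enlarged $(r+1)\times(r+1)$ Gram matrix. I would reuse the block-inversion scheme \ref{label_eq_generic_block_inverse} with $D=\sigma_n$, $E=\Cn\Gnplusone$, $F=\Gnplusonetop\Gnplusone$, whose Schur complement is already known to equal $S=\left\|\Gnplusonerej\right\|_2^2$ by \ref{label_eq_schur_complement}, together with the identity $\sigma_n^{-1}\Cn\Gnplusone = \Cntilde\Gnplusone = \gnplusoneproj$ so that $\sigma_n^{-1}$ is never formed explicitly.

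Carrying out the product $\sigma_{n+1}^{-1}\Cnplusone$ block by block, using the key simplification $\gnplusoneprojtop\Cn - \Gnplusonetop = -\Gnplusonerejtop$, I expect the top $r$ rows to collapse into a rank-one correction of $\Cntilde$ and the last row into a rescaled rejection:
\begin{equation*}
\Cnplusonetilde = \begin{pmatrix}
\Cntilde - \dfrac{\gnplusoneproj\,\Gnplusonerejtop}{\left\|\Gnplusonerej\right\|_2^2} \\[2mm]
\dfrac{\Gnplusonerejtop}{\left\|\Gnplusonerej\right\|_2^2}
\end{pmatrix}.
\end{equation*}
The cost is then dominated by the single outer product $\gnplusoneproj\Gnplusonerejtop$ (an $r\times m$ matrix), i.e. $O(mr)$, while appending the new row and the preliminary computation of $\gnplusoneproj$ and $\Gnplusonerej$ are likewise $O(mr)$. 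Gathering both cases yields the claimed bound; the only genuine difficulty is recognizing that the enlarged inverse $\sigma_{n+1}^{-1}$ need never be materialized, since its product with $\Cnplusone$ telescopes into a rank-one update of the already-stored $\Cntilde$.
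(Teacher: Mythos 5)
Your proposal is correct and follows essentially the same route as the paper: the same dichotomy on the rejection vector, Sherman--Morrison for $\Pnplusoneinv$ in the dependent case, and the block-inversion of $\sigma_{n+1}$ (with Schur complement $\left\|\Gnplusonerej\right\|_2^2$) collapsing into the same rank-one update of $\Cntilde$ in the independent case. Your explicit formula for $\Cnplusonetilde$ matches the paper's equation \ref{label_eq_Cnplusonetilde_indep_update} exactly.
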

\begin{proof}
The updating formula naturally depends on the linear dependency of $\Gnplusone$ from previous observations (i.e. whether $\Gnplusonerej$ is non-null). Let us note that $\Gnplusonerej$ itself can be computed in $O(mr)$ operations if   $\Cntilde$ and $\Cn$ are known.

If $\Gnplusone$ is linearly independent from previous observations (i.e. $\Gnplusonerej \neq 0$), equation \ref{label_eq_Cnplusone} is valid, leading to:
\begin{equation}\label{label_eq_Pnplusoneinv_indep_update}
\begin{aligned}
\Pnplusone^{-1} &= \left(\Bnplusonetop \Bnplusone\right)^{-1} = \left(\begin{pmatrix}
		\Bntop & 0 \\
		0 & 1 \\
	\end{pmatrix} \begin{pmatrix}
		\Bn & 0 \\
		0 & 1 \\
	\end{pmatrix}\right)^{-1} = \begin{pmatrix}
		\Bntop\Bn & 0 \\
		0 & 1 \\
	\end{pmatrix}^{-1}\\
	&= \begin{pmatrix}
		\Pninv & 0 \\
		0 & 1 \\
	\end{pmatrix}\\
\end{aligned}
\end{equation}
\begin{equation}\label{label_eq_Cnplusone_indep_update}
\begin{aligned}
\Cnplusone = \begin{pmatrix}
		\Cn \\
		\Gnplusonetop \\
	\end{pmatrix}
\end{aligned}
\end{equation}
Using equation \ref{label_eq_Anplusonepinv_update_indep_proof_final}, we can write:
\begin{equation}\label{label_eq_Cnplusonetilde_indep_update}
\begin{aligned}
\Cnplusonetilde &= (\Cnplusone\Cnplusonetop)^{-1}\Cnplusone
 = \begin{pmatrix}
		\sigma_n^{-1} + \frac{\gnplusoneproj \gnplusoneprojtop}{\left\|\Gnplusonerej\right\|_2^2} & -\frac{\gnplusoneproj}{\left\|\Gnplusonerej\right\|_2^2} \\
		-\frac{\gnplusoneprojtop}{\left\|\Gnplusonerej\right\|_2^2} & \frac{1}{\left\|\Gnplusonerej\right\|_2^2} \\
	\end{pmatrix} \begin{pmatrix}
		\Cn \\
		\Gnplusone \\
	\end{pmatrix}\\
    &= \begin{pmatrix}
		\Cntilde - \gnplusoneproj\frac{\Gnplusonerejtop}{\left\|\Gnplusonerej\right\|_2^2} \\
		\frac{\Gnplusonerejtop}{\left\|\Gnplusonerej\right\|_2^2} \\
	\end{pmatrix}
\end{aligned}
\end{equation}
These formulae can be applied in $O(mr)$ operations, since $\gnplusoneproj$ and $\Gnplusonerej$ can themselves be computed in $O(mr)$ operations if $\Cntilde$ and $\Cn$ are already known.

If $\Gnplusone$ is a linear combination of previous observations (i.e. $\Gnplusonerej = 0$), equation \ref{label_eq_factorization_lindep} is valid, leading to:
\begin{equation}\label{label_eq_Cnplusone_Cnplusonetilde_lindep_update}
\begin{aligned}
\Cnplusone = \Cn,\quad \Cnplusonetilde = \Cntilde
\end{aligned}
\end{equation}
Using equation \ref{label_eq_Anplusonepinv_update_lindep_proof_final}, we can write:
\begin{equation}\label{label_eq_Pnplusoneinv_lindep_update}
\begin{aligned}
\Pnplusoneinv &= \Pninv - \frac{\znplusone\znplusonetop}{1+\gnplusoneprojtop\znplusone}
\end{aligned}
\end{equation}
This formula can be applied in $O(mr)$ operations\footnote{Note that this complexity reduces to $O(r^2)$ if $\gnplusoneproj$ is already known.}, since $\gnplusoneproj$ and $\znplusone$ can themselves be computed in $O(mr)$ operations if $\Pninv$, $\Cntilde$ and $\Cn$ are already known.

\end{proof}

\begin{corollary}\label{label_th_full_lstsq}
For any $n\times m$ matrix $\An$ of rank $r$ and any vector $\Yn \in \mathbb{R}^n$, the least square solution $\xn = \Anpinv \Yn$ can be computed in $O(mnr)$ operations.
\end{corollary}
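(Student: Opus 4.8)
The plan is to compute $\xn$ incrementally, reading the rows $\Gamma_1^\top,\dots,\Gntop$ of $\An$ together with the entries $y_1,\dots,y_n$ of $\Yn$ one observation at a time, and maintaining as a loop invariant the quadruple $\left(\xn,\,\Cn,\,\Cntilde,\,\Pninv\right)$ associated with the rows already processed. First I would establish the base case $k=0$: the empty matrix gives $x_0=0$ together with empty factors $\Cn$, $\Cntilde$ and $\Pninv$ (of sizes $0\times m$, $0\times m$ and $0\times 0$), so the invariant holds trivially at the start. Note that the full pseudoinverse $\Anpinv$ is deliberately \emph{not} maintained, as this would reintroduce the $\Theta(mn)$ per-step cost; only the $O(mr)$-sized bookkeeping of the RLS theorems is needed.

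The inductive step simply chains the update theorems already proved. At step $k$ I would first form the rejection $\Gamma_k^{rej}$, which costs $O(m\,r_{k-1})$ (where $r_{k-1}$ is the rank of the first $k-1$ rows) once $\Cn$ and $\Cntilde$ are known, and whose vanishing decides the linear (in)dependence of the incoming observation. Branching on this test, I would update the solution by Theorem \ref{label_th_xnplusone_indep} in the independent case or by Theorem \ref{label_th_xnplusone_lindep} in the dependent case, each costing $O(m\,r_{k-1})$, and then restore the invariant by updating $\Cnplusone$, $\Cnplusonetilde$ and $\Pnplusoneinv$ through Theorem \ref{label_th_update_decomp}, again in $O(m\,r_{k-1})$. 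Since each of these three theorems requires exactly the quantities stored in the invariant, the invariant is re-established after the step and the total per-step cost is $O(m\,r_{k-1})$.

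It then remains to sum the per-step costs over $k=1,\dots,n$. The single point that calls for a genuine argument rather than a citation is the bound on the intermediate ranks: appending a row can only preserve or increase the rank, so the rank $r_k$ of the first $k$ rows satisfies $r_k \le r$ for every $k\le n$. Consequently each of the $n$ updates costs $O(mr)$, and the whole computation runs in $\sum_{k=1}^{n} O(m\,r_{k-1}) = O(nmr)$, which also absorbs the $O(m)$ cost of reading each new row. I expect the main (and only minor) obstacle to be the bookkeeping of degenerate inputs: the very first observation, and any zero observation, must be routed through the dependent-case branch, where $\gnplusoneproj=0$ and $\znplusone=0$ yield $\xnplusone=\xn$, exactly as required when a vanishing row is appended and the minimizer is unaffected.
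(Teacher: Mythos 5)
Your argument is correct and is precisely the one the paper intends: the corollary is stated without a separate proof immediately after Theorem \ref{label_th_update_decomp}, and Algorithm \ref{label_algo_RLS} is exactly your loop --- chaining Theorems \ref{label_th_xnplusone_indep}, \ref{label_th_xnplusone_lindep} and \ref{label_th_update_decomp} from the empty base case, with the observation that every intermediate rank $r_k$ is at most $r$. One small slip in your closing sentence: a \emph{nonzero} first observation has $\Gamma_1^{rej}=\Gamma_1\neq 0$ and must go through the independent branch (the dependent branch would give $\gamma^p$ and $\zeta$ empty, hence $K=0$ and $X_1=0$ regardless of $y_1$); your own test on the vanishing of the rejection already routes it correctly, so only genuinely zero rows should fall into the dependent case.
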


\subsection{Orthogonal rank factorization}
\label{label_orthog_variant}

The theorems regarding the update of the pseudo-inverse $\Anplusonepinv$ (theorems \ref{label_th_pinv_indep} and \ref{label_th_pinv_lindep}) and least-squares solution $\xnplusone$ (theorems \ref{label_th_xnplusone_indep} and \ref{label_th_xnplusone_lindep}) are valid for any decomposition satisfying equation \ref{label_eq_mat_prod}.
Therefore, the rows of $\Cn$ can be required to form an orthogonal basis. This can be easily achieved by storing only the rejection vectors $\Gnplusonerej$ into $\Cnplusone$. More generally, one can store rescaled rejection vectors $\alpha_{n+1}\Gnplusonerej$ instead, with $0\neq\alpha_{n+1}\in\mathbb{R}$. Theorem \ref{label_th_update_decomp} remains valid, with equations \ref{label_eq_Pnplusoneinv_indep_update}, \ref{label_eq_Cnplusone_indep_update} and \ref{label_eq_Cnplusonetilde_indep_update} becoming:
\begin{equation}\label{label_eq_Cnplusone_indep_update_orthog}
\begin{aligned}
\Cnplusone = \begin{pmatrix}
		\Cn \\
		\alpha_{n+1}\Gnplusonerejtop \\
	\end{pmatrix}
\end{aligned}
\end{equation}
\begin{equation}\label{label_eq_Cnplusonetilde_indep_update_orthog}
\begin{aligned}
\Cnplusonetilde = \begin{pmatrix}
		\Cntilde \\
		\frac{\Gnplusonerejtop}{\alpha_{n+1}\left\|\Gnplusonerej\right\|_2^2} \\
	\end{pmatrix}
\end{aligned}
\end{equation}
\begin{equation}\label{label_eq_Pnplusoneinv_indep_update_orthog}
\begin{aligned}
\Pnplusone^{-1} &= \left(\Bnplusonetop \Bnplusone\right)^{-1} \\
                &= \left(\begin{pmatrix}
		\Bntop & \gnplusoneproj \\
		0 & \frac{1}{\alpha_{n+1}} \\
	\end{pmatrix} \begin{pmatrix}
		\Bn & 0 \\
		\gnplusoneprojtop & \frac{1}{\alpha_{n+1}} \\
	\end{pmatrix}\right)^{-1} \\
                &= \begin{pmatrix}
		\Pninv & -\alpha_{n+1}\znplusone \\
		-\alpha_{n+1}\znplusonetop & \alpha_{n+1}^2(1+\gnplusoneprojtop\znplusone) \\
	\end{pmatrix}
\end{aligned}
\end{equation}

In particular, one can consider $\alpha_{n+1} = \left\|\Gnplusonerej\right\|_2^{-1}$. In this case the rows of $\Cn$ form an orthonormal basis, i.e., $\Cn$ is an orthogonal matrix (i.e. $\Cntilde = \Cn$), and equation \ref{label_eq_mat_prod} becomes a Gram-Schmidt based thin LQ decomposition. This variant offers slightly reduced storage and update computational time.

\section{Implementation}
Based on the theorems above, one can devise a simple algorithm satisfying corollary \ref{label_th_full_lstsq}, the pseudocode of which is shown in Algorithm \ref{label_algo_RLS}.
\begin{algorithm}
\caption{\label{label_algo_RLS}Rank-deficient RLS, also called rank-Greville}
\begin{algorithmic}[1]
\Procedure{UpdateLeastSquares}{$\Gamma$, $y$, $X$, $C$, $\tilde{C}$, $P^{-1}$}
\State $\gamma \gets \tilde{C}\,\Gamma$
\State $\Gamma_r \gets \Gamma - C^\top\gamma$
\If {$\Gamma_r \neq 0$}
\State $K \gets \frac{\Gamma_r}{\left\|\Gamma_r\right\|_2^2}$
\State $C \gets \begin{pmatrix}
		C \\
		\Gamma^\top \\
	\end{pmatrix}$
\State $\tilde{C} \gets \begin{pmatrix}
		\tilde{C} - \gamma\,K^\top \\
		K^\top \\
	\end{pmatrix}$
\State $P^{-1} \gets \begin{pmatrix}
		P^{-1} & 0 \\
		0 & 1 \\
	\end{pmatrix}$
\Else
\State $\zeta \gets P^{-1}\gamma$
\State $K \gets \frac{\tilde{C}^\top\zeta}{1+\gamma^\top\zeta}$
\State $P^{-1} \gets P^{-1} - \frac{\zeta\,\zeta^\top}{1+\gamma^\top\zeta}$
\EndIf
\State $X \gets X + K\times\left(y - \Gamma^\top X\right)$
\EndProcedure
\end{algorithmic}
\end{algorithm}

These formulae have been implemented in Python3 using the Numpy library. In addition to the least-squares update algorithm (in $O(mr)$ operations), this implementation also supports pseudo-inverse (in $O(mn)$ operations) and covariance matrix\cite{greene2003econometric} updates (in $O(m^2)$ operations).

The orthogonal and orthonormal basis variants described in section \ref{label_orthog_variant} have also been implemented for least-squares update (in $O(mr)$ operations) and pseudo-inverse update (in $O(mn)$ operations).

In practice, checking if $\Gnplusonerej \neq 0$ is ill-defined with floating-point arithmetic. Yet, it is crucial in this algorithm as it determines the effective rank of the linear system.{\iffalse{Linear solvers implementations commonly determine the effective rank though a $rcond$ parameter. This parameter defines the minimal estimated reciprocal condition number acceptable for $A$.\label{label_def_rcond}}\fi} Therefore, we define a threshold $eps$ so that $\Gnplusone$ is considered a linear combination of previous observations if and only if:
\begin{equation}\label{label_eq_eps}
||\Gnplusonerej||_2 < eps \quad\text{or}\quad ||\Gnplusonerej||_2 < eps\times ||\Gnplusone||_2
\end{equation}
By default, \textit{eps} is set to $(m^2r+mr+m)\times\epsilon_M$ in order to account for rounding error propagation, with $\epsilon_M$ being the machine precision.

It is important to note that the general non-orthogonal basis implementation does support exact representations such as those defined in Python's \textit{fractions} module. Indeed, this scheme (as the Greville algorithm) uses only operations well defined on rational numbers.{\iffalse{ This property, its generality and remarkable simplicity were the reason we focused on its implementation.}\fi} One should note that the orthogonal basis variant scheme is also compatible with exact numerical representations as long as the rescaling factors $\alpha_{n+1}$ can themselves be represented exactly.
\section{Numerical Tests}

All computations were performed using Python 3.6.9\cite{Python3,IPython} with Scipy version 1.4.1\cite{2020SciPy-NMeth} and Numpy version 1.18.3\cite{Numpy, Numpy2} linked with OpenBLAS on an Intel Xeon W-2123 CPU with DDR4-2666 RAM. The code used for the numerical tests is available along with our Python3 implementation in the supporting information and will soon become available on \url{https://github.com/RubenStaub/rank-greville}.

\subsection{Computational efficiency}

In this section we empirically evaluate the computational efficiency achieved by the rank factorization Greville algorithm described in this paper (Algorithm \ref{label_algo_RLS}), referred to as "rank-Greville". Note that for comparison purposes, its total complexity (i.e. computing the full least-squares solution from scratch in $O(nmr)$ operations) is studied here, even though this algorithm was not specifically designed as a least-squares solver from scratch\footnote{Indeed, in this case rank-Greville must perform much more memory allocation and copy than a conventional solver, explaining, in part, the large prefactor}. Rather, rank-Greville was designed for the recursive update in applications requiring a constantly up-to-date solution.

For this analysis to be meaningful, the rank-Greville algorithm time efficiency is compared against standard algorithms from the LAPACK library:
\label{label_LAPACK_desc}
\begin{itemize}
\item "gelsy" refers to the DGELSY least-squares solver from the LAPACK library based on a complete orthogonal factorization (using QR factorization with column pivoting)\cite{lapack_guide, higham_accuracy_algo}.
\item "gelss" refers to the DGELSS least-squares solver from the LAPACK library based on SVD\cite{lapack_guide}.
\item "gelsd" refers to the DGELSD least-squares solver from the LAPACK library also based on SVD, using a diagonal form reduction\cite{lapack_guide}.
\end{itemize}
These routines were accessed through the \textit{scipy.linalg.lstsq} wrapper function from the Scipy library.

For these tests, we consider the computation from scratch of the least-squares solution $\xn$ verifying:
\begin{equation}\label{label_eq_tests_speed}
R_\mathcal{N}(n,m,r)\xn + \epsilon_n = R_\mathcal{N}(n,1,1)
\end{equation} where \add{$R_\mathcal{N}(n,m,r) \in \mathbb{R}^{n\times m}$ is a pseudo-random matrix with rank $r$ and whose elements are standardized ($\mu=0$, $\sigma=1$)\footnote{While random matrices are generally non rank-deficient, rank-deficient pseudo-random matrices are obtained here from the dot product of adequate pseudo-random matrices generated using the normal $\mathcal{N}(0,1)$ distribution}.} For reproducibility purposes, the pseudo-random number generator was reset before each $R_\mathcal{N}(n,m,r)$ computation.

In order to assess the scaling properties of these algorithms, we evaluate (using Python's \textit{timeit} module) the time elapsed for solving equation \ref{label_eq_tests_speed} \add{using various input sizes.
In the following, it is assumed that sufficiently large matrices were used for each algorithm to approach its asymptotic behavior in terms of computational time dependency with respect to input size. In other words, we assume that the scaling properties observed reflect the computational complexity of the algorithms considered\footnote{Which is equivalent to assuming that, for each algorithm, the largest computational times are dominated by the highest order term for the variable being considered.}.} 
These analyses were performed using various ranges for the parameters $n$, $m$, and $r$, \add{allowing for an empirical characterization of the computational complexities involved}:
\begin{itemize}
\item Full-rank square matrices:
\begin{equation}\label{label_eq_test_square}
r = n = m
\end{equation} 
Figure \ref{label_fig_cost_square} highlights a \add{roughly $n^3$ scaling of the elapsed computational time for large enough matrices, for all algorithms.}

\add{This corroborates the} $O(n^3)$ asymptotic complexity of all algorithms in this case.
\item Full-rank rectangular matrices:
\begin{equation}\label{label_eq_test_rect}
r = n \leq m
\end{equation} 
When fixing the number of rows $n$, Figure \ref{label_fig_cost_rect} highlights a \add{roughly linear scaling (with respect to $m$) of the elapsed computational time for large enough matrices, for all algorithms}\footnote{Additional tests at $m > 4\times 10^5$ seem to confirm the asymptotic linear dependency on $m$ for the DGELSS solver.}. 

\add{Since square matrices are a special case of rectangular matrices, this result is expected to be compatible with the scaling properties found for square matrices, if we assume a similar processing of both inputs. Therefore, in the case of full-rank rectangular matrices (with $n \leq m$), only an empirical $mn^2$ scaling (with respect to $n$ and $m$) can be deduced from these experiments.}

\add{This supports a} $O(mn^2)$ asymptotic complexity of all algorithms in this case.
\item Rank-deficient square matrices:
\begin{equation}\label{label_eq_test_rank}
r \leq n = m
\end{equation} 
At fixed rank $r=100$, \add{with square matrices of increasing size,} Figure \ref{label_fig_cost_rank} nicely highlights the \add{{\iffalse{specific}\fi}particular scaling (with respect to $n$) of rank-Greville, compared to the other commonly used algorithms.
Indeed, all LAPACK-based algorithms display a roughly $n^3$ empirical scaling (comparable with results found for full-rank square matrices), while rank-Greville was found to display a roughly $n^2$ scaling of the elapsed computational time for large enough matrices.}

\add{Using a similar argument as before, and assuming rank-deficient matrices are not treated differently, these results are expected to be compatible with the previously found scaling properties. Therefore, the only empirical scaling (with respect to $r$, $n$ and $m$) compatible with all previous experiments is in $mnr$ for rank-Greville and $mn^2$ for all other algorithms.}

\add{These empirical findings {\iffalse{substantiate}\fi}demonstrate the distinctive ability of rank-Greville to take advantage of rank-deficiency, in order to} reach an $O(mnr)$ asymptotic complexity.
\item Optimal (time-efficiency wise) applicability domains: 
\begin{equation}\label{label_eq_test_ratio}
r \leq n \leq m
\end{equation}
Figure \ref{label_fig_domain_ratio} illustrates which algorithm is the fastest (and by which margin) for a range of parameters ratios. Even though not specifically designed for solving the linear least-squares problem from scratch, the rank-Greville algorithm appears to be more efficient than other LAPACK solvers, but only when the observations matrix has particularly low rank $r \lesssim 0.15\times\min(n,m)$.
\end{itemize}

\begin{figure}[h!]
\centering
\includegraphics[width=\linewidth]{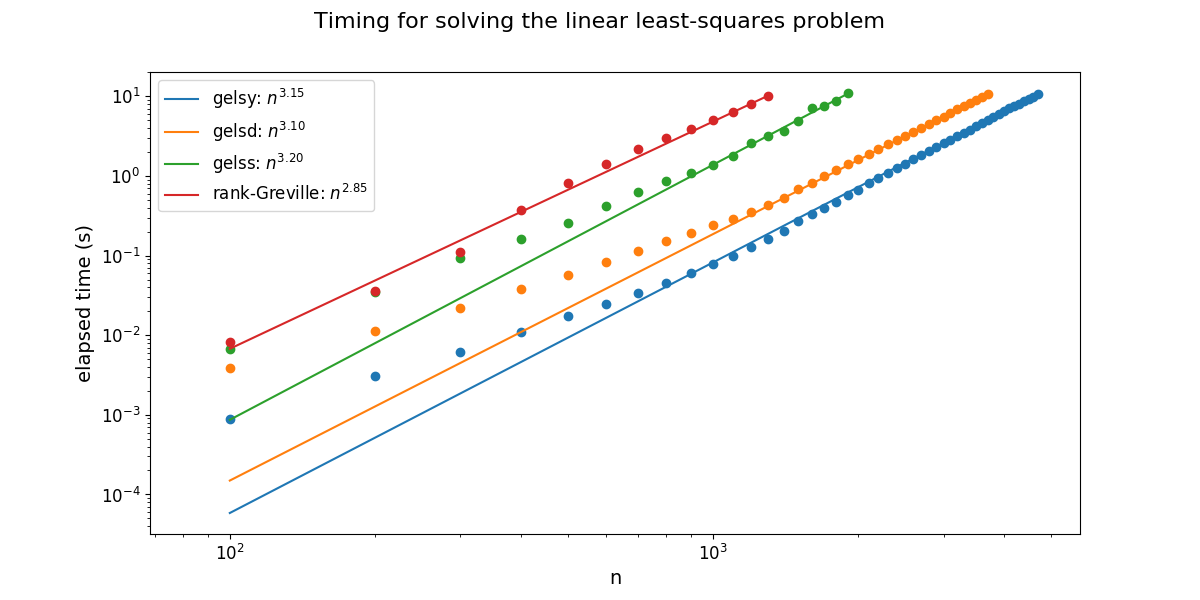}
\caption{\label{label_fig_cost_square}Timings for solving the linear least-squares problem on a random full-rank square observations matrix $R_\mathcal{N}(n,n,n)$. The asymptotic dependency with respect to $n$ is fitted on the last points and reported in the legend.}
\end{figure}

\begin{figure}[h!]
\centering
\includegraphics[width=\linewidth]{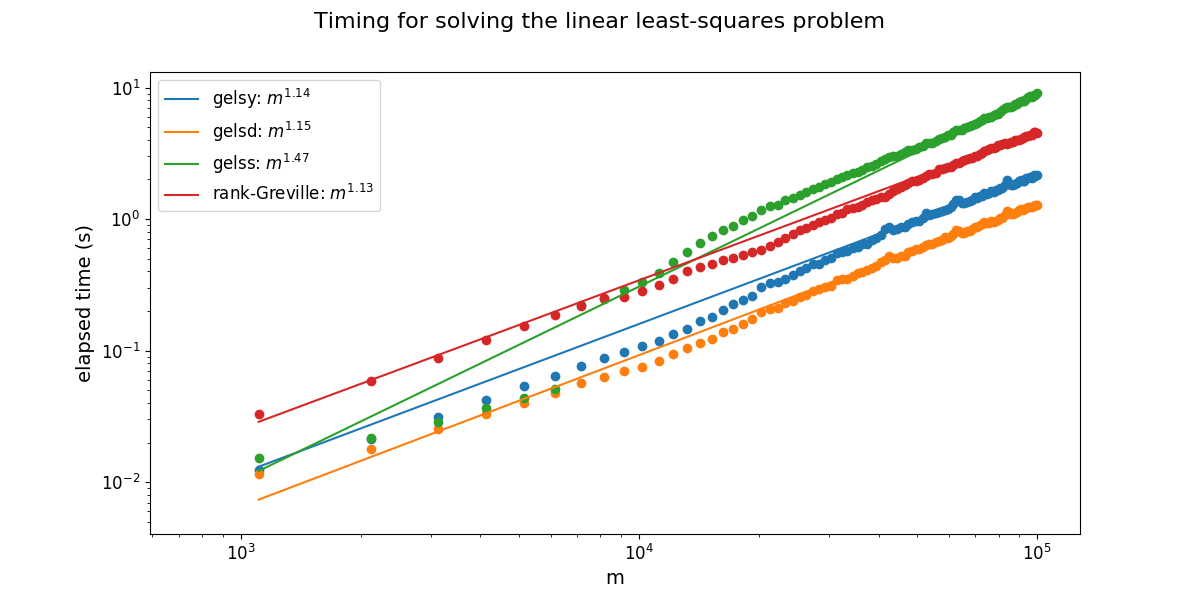}
\caption{\label{label_fig_cost_rect}Timings for solving the linear least-squares problem on a random full-rank observations matrix $R_\mathcal{N}(n,m,n)$ with a fixed number of observations $n=100$. The asymptotic dependency with respect to $m$ is fitted on the last points and reported in the legend.}
\end{figure}

\begin{figure}[h!]
\centering
\includegraphics[width=\linewidth]{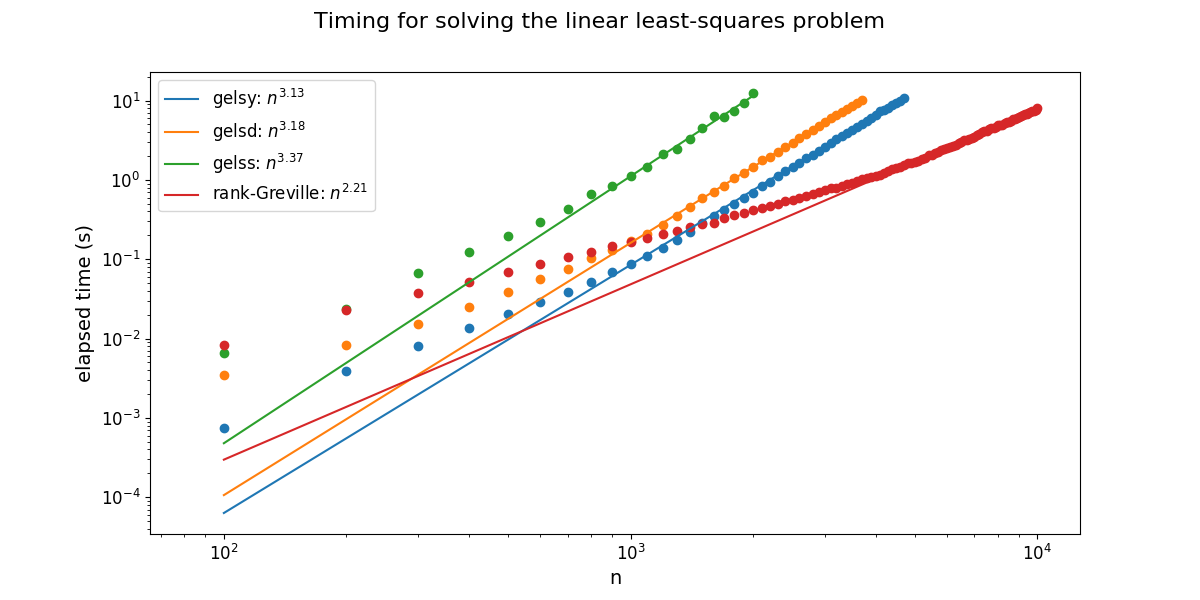}
\caption{\label{label_fig_cost_rank}Timings for solving the linear least-squares problem on a random rank-deficient square observations matrix $R_\mathcal{N}(n,n,r)$ with a fixed rank $r=100$. The asymptotic dependency with respect to $n$ is fitted on the last points and reported in the legend.}
\end{figure}

\begin{figure}[h!]
\centering
\includegraphics[width=\linewidth]{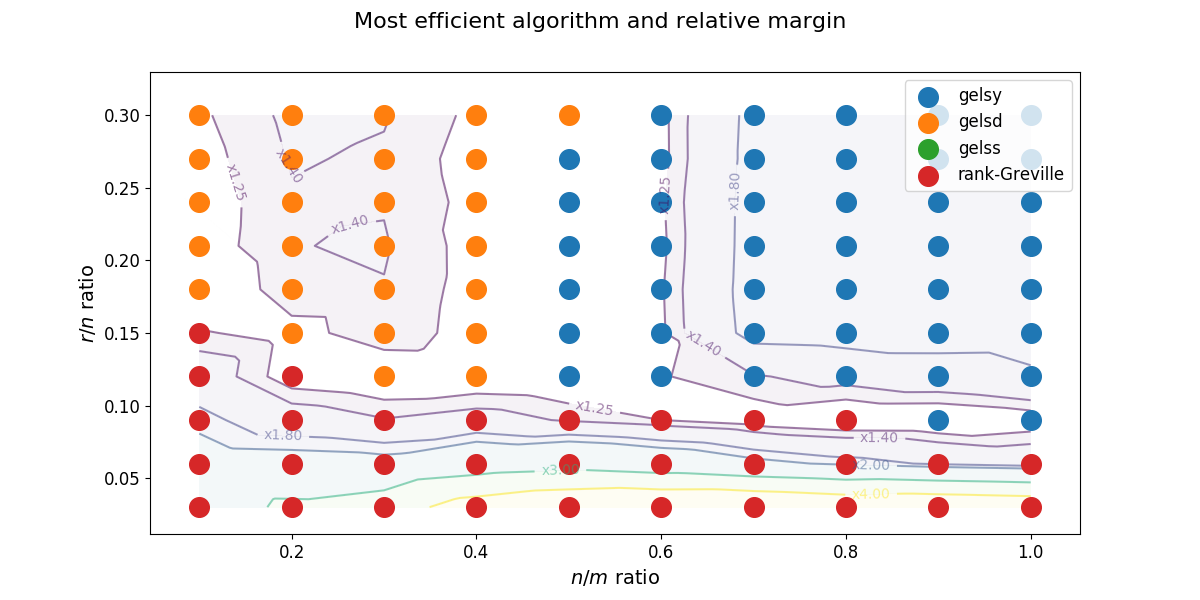}
\caption{\label{label_fig_domain_ratio}The fastest algorithm is represented for various $n/m$ and $r/n$ ratios, with $m = 4000$. The contour plot represents the interpolated relative margin by which an algorithm is the fastest (e.g. $\times1.25$ means that the execution time for \add{the} second fastest algorithm was $1.25$ times larger than for the fastest one).}
\end{figure}

These tests confirm the lowest $O(mnr)$ asymptotic complexity of the rank-Greville algorithm compared with other LAPACK solvers ($O(m^2n)$ or $O(mn^2)$) for solving the least-squares problem from scratch. We note, nonetheless, that rank-Greville has a larger pre-factor, in part due to the additional work of maintaining a constantly up-to-date minimum-norm least-squares solution, typical of RLS solvers. 
\clearpage

\subsection{Numerical stability}

In this section we evaluate the numerical stability achieved by the approach described in this paper for computing the pseudoinverse.

For a more meaningful analysis, we compare our rank-Greville algorithm with standard algorithms from the LAPACK library described in \ref{label_LAPACK_desc}, and other related algorithms:
\begin{itemize}
\item "Cholesky" refers to the \textit{scipy.linalg.cho\_solve} solver applied to the Cholesky factorization of $M = A^\top A$.
\item "Greville" refers to a basic implementation of the original Greville algorithm\cite{greville, ben_greville_generalized}.
\item "orthogonal" refers to the orthogonal variant described in section \ref{label_orthog_variant} with rescaling factors $\alpha_{n+1} = 1$.
\item "orthonormal" refers to the orthonormal variant described in section \ref{label_orthog_variant} (i.e.\ orthogonal variant with rescaling factors $\alpha_{n+1} = ||\Gnplusonerej||_2^{-1}$).
\end{itemize}

In order to assess the numerical stability of the algorithms described in this paper, we rely on the measures defined in \cite{numerical_tests, higham_accuracy_algo}:
\begin{itemize}
\item The stability factor of an algorithm with respect to the computation of the pseudoinverse $A^+$ of $A$ is given by:
\begin{equation}\label{label_eq_stability_factor}
e_\text{algo} = \frac{\left|\left|A^+_\text{algo} - A^+\right|\right|_2}{\epsilon_M \left|\left|A^+\right|\right|_2 \kappa_2(A)}
\end{equation}
where $A^+_\text{algo}$ is the pseudoinverse of $A$ computed by the algorithm, $A^+$ is the exact pseudoinverse, $||\cdot||_2$ is the 2-norm{ (e.g. for a matrix $A$, $||A||_2 = \max(\sigma(A))$ is the largest singular value of $A$)}, $\kappa_2(A) = \frac{\max(\sigma(A))}{\min(\sigma(A))}$ is the condition number of $A$ and $\epsilon_M$ is the machine precision.

This estimator is related to the forward stability of such algorithm, and should be bounded by a small constant depending on $m$ and $n$.
\item Similarly, we refer to the residual error as:
\begin{equation}\label{label_eq_residual_error}
res_\text{algo} = \frac{\left|\left|A^+_\text{algo}A - I\right|\right|_2}{\left|\left|A\right|\right|_2 \left|\left|A^+_\text{algo}\right|\right|_2}
\end{equation}
where $I$ is the identity{\iffalse{ on the appropriate ambient space}\fi}.

This estimator is related to the mixed forward-backward stability of such algorithm, and should be of the order of the machine precision $\epsilon_M$.
\end{itemize}

During the test, the machine precision used corresponds to $\epsilon_M \approx 2.22 \times 10^{-16}$.

This evaluation was performed empirically, using the matrices defined in \cite{numerical_tests}:
\begin{itemize}
\item Pascal matrices $P(n) \in \mathbb{Z}^{n\times n}$.

These are full-rank square matrices whose inverses $P(n)^{-1} \in \mathbb{Z}^{n\times n}$ can be computed exactly since their elements are also integers.

Empirical results are reported in Tables \ref{table_e_Pascal} and \ref{table_res_Pascal}. We found Greville-like algorithms, as well as the Cholesky decomposition to be orders of magnitudes less robust than the standard LAPACK solvers, with respect to both numerical stability indicators. Nonetheless, one should note that, as expected, when using \textit{fractions.Fraction} based numerical representation, the rank Greville algorithm and its orthogonal variant were able to compute the exact inverse.

\item Random matrices $R_\mathcal{N} \in \mathbb{R}^{3n\times n}$ whose elements are sampled from the normal $\mathcal{N}(0,1)$ distribution.

Pseudoinverses generated by the DGELSY solver were used as reference for computing the stability factor since they display the lowest residual error and the empirical results are reported in Tables \ref{table_e_random} and \ref{table_res_random}. We also found that the numerical stability indicators for the Greville-like algorithms are significantly dependent on the seed used by the pseudorandom number generator, unlike other algorithms tested.

\item Random ill-conditioned matrices ${R_\mathcal{N}}^4 \in \mathbb{R}^{n\times n}$, taken as the fourth power of random square matrices $R_\mathcal{N} \in \mathbb{R}^{n\times n}$ whose elements are sampled from the $\mathcal{N}(0,1)$ distribution.

Similarly as above, we used for pseudoinverse reference $\left({R_\mathcal{N}}^4\right)^+ = \left({R_\mathcal{N}}^+\right)^4$, the fourth power of the pseudoinverse generated by the DGELSY solver.

Empirical results are reported in Tables \ref{table_e_ill-cond} and \ref{table_res_ill-cond}, which show that for these ill-conditioned matrices, the Cholesky-based solver is, overall, the least stable algorithm tested herein.

\item Random matrices $USV^\top$, where $U \in \mathbb{R}^{5n\times n}$ and $V \in \mathbb{R}^{n\times n}$ are random column orthogonal matrices and {$S = diag(1, 2^\frac{1}{2}, \ldots, 2^\frac{n-1}{2})$}.

In this case, $(USV^\top)^+ = VS^{-1}U^\top$ was used for pseudoinverse reference.

Empirical results are reported in Tables \ref{table_e_USV} and \ref{table_res_USV}. For these tests, the \textit{rcond}/\textit{eps} parameter was set to $10^{-8}$. This was required by Greville-like algorithms to reach a reasonable solution.

\item Kahan matrices\cite{Kahan_1966} $K(c, s) \in \mathbb{R}^{n\times n}$ with $c^2 + s^2 = 1$ and $n=100$.

An explicit formula is available for the inverse\cite{Kahan_1966}, and was used as pseudoinverse reference.

Empirical results are reported in Table \ref{table_res_Kahan}. Unlike what was reported in \cite{numerical_tests}, we did not find the pure SVD-based solver to perform significantly worse than other LAPACK solvers. Furthermore, after setting the \textit{rcond}/\textit{eps} parameter low enough to tackle the extreme ill-conditionality of the Kahan matrices (i.e. $rcond < \kappa_2(A)^{-1}$), all algorithms (except Cholesky) performed relatively well, with the QR-based LAPACK solver performing the best.
\end{itemize}

\begin{table}\
		                  \caption{\label{table_e_Pascal}Empiric stability factors associated with the pseudoinverse computation of Pascal matrices $P(n)$.}\
		                  \makebox[\linewidth]{
\begin{tabular}{llllllllll}
\hline
 $n$   & $\kappa_2(A)$        & $e_\text{orthonormal}$   & $e_\text{orthogonal}$   & $e_\text{rank-Greville}$   & $e_\text{Greville}$   & $e_\text{Cholesky}$   & $e_\text{gelsy}$     & $e_\text{gelsd}$     & $e_\text{gelss}$     \\
\hline
 $4$   & $6.92\mathrm{e}{+2}$ & $8.80\mathrm{e}{-2}$     & $1.11\mathrm{e}{-1}$    & $1.67\mathrm{e}{+0}$       & $4.73\mathrm{e}{-2}$  & $5.88\mathrm{e}{+0}$  & $3.86\mathrm{e}{-2}$ & $6.67\mathrm{e}{-2}$ & $6.76\mathrm{e}{-2}$ \\
 $6$   & $1.11\mathrm{e}{+5}$ & $7.18\mathrm{e}{+2}$     & $1.06\mathrm{e}{+2}$    & $2.12\mathrm{e}{+2}$       & $6.58\mathrm{e}{+2}$  & $2.44\mathrm{e}{+3}$  & $5.04\mathrm{e}{-3}$ & $4.74\mathrm{e}{-2}$ & $4.74\mathrm{e}{-2}$ \\
 $8$   & $2.06\mathrm{e}{+7}$ & $3.42\mathrm{e}{+5}$     & $1.93\mathrm{e}{+3}$    & $2.18\mathrm{e}{+4}$       & $2.08\mathrm{e}{+5}$  & $1.59\mathrm{e}{+5}$  & $3.86\mathrm{e}{-3}$ & $8.55\mathrm{e}{-3}$ & $8.55\mathrm{e}{-3}$ \\
 $10$  & $4.16\mathrm{e}{+9}$ & $1.08\mathrm{e}{+6}$     & $1.08\mathrm{e}{+6}$    & $1.08\mathrm{e}{+6}$       & $1.08\mathrm{e}{+6}$  & $7.34\mathrm{e}{+5}$  & $3.00\mathrm{e}{-4}$ & $1.86\mathrm{e}{-3}$ & $1.86\mathrm{e}{-3}$ \\
\hline
\end{tabular}
}\end{table}
\begin{table}\
		                  \caption{\label{table_res_Pascal}Empiric residual errors associated with the pseudoinverse computation of Pascal matrices $P(n)$.}\
		                  \makebox[\linewidth]{
\begin{tabular}{llllllllll}
\hline
 $n$   & $\kappa_2(A)$        & $res_\text{orthonormal}$   & $res_\text{orthogonal}$   & $res_\text{rank-Greville}$   & $res_\text{Greville}$   & $res_\text{Cholesky}$   & $res_\text{gelsy}$    & $res_\text{gelsd}$    & $res_\text{gelss}$    \\
\hline
 $4$   & $6.92\mathrm{e}{+2}$ & $1.83\mathrm{e}{-15}$      & $5.99\mathrm{e}{-16}$     & $3.85\mathrm{e}{-16}$        & $1.66\mathrm{e}{-17}$   & $4.01\mathrm{e}{-15}$   & $5.29\mathrm{e}{-17}$ & $5.39\mathrm{e}{-17}$ & $5.42\mathrm{e}{-17}$ \\
 $6$   & $1.11\mathrm{e}{+5}$ & $1.81\mathrm{e}{-13}$      & $4.16\mathrm{e}{-14}$     & $4.71\mathrm{e}{-14}$        & $1.46\mathrm{e}{-13}$   & $1.04\mathrm{e}{-12}$   & $2.49\mathrm{e}{-17}$ & $4.77\mathrm{e}{-17}$ & $4.47\mathrm{e}{-17}$ \\
 $8$   & $2.06\mathrm{e}{+7}$ & $7.60\mathrm{e}{-11}$      & $6.17\mathrm{e}{-13}$     & $4.84\mathrm{e}{-12}$        & $4.61\mathrm{e}{-11}$   & $9.24\mathrm{e}{-11}$   & $6.06\mathrm{e}{-17}$ & $3.16\mathrm{e}{-17}$ & $1.58\mathrm{e}{-17}$ \\
 $10$  & $4.16\mathrm{e}{+9}$ & $1.42\mathrm{e}{-9}$       & $1.61\mathrm{e}{-9}$      & $1.37\mathrm{e}{-9}$         & $1.54\mathrm{e}{-9}$    & $2.35\mathrm{e}{-8}$    & $4.78\mathrm{e}{-17}$ & $3.26\mathrm{e}{-17}$ & $1.54\mathrm{e}{-17}$ \\
\hline
\end{tabular}
}\end{table}

\begin{table}\
		                  \caption{\label{table_e_random}Empiric stability factors associated with the pseudoinverse computation of random matrices $R_\mathcal{N} \in \mathbb{R}^{3n\times n}$ with elements distributed from $\mathcal{N}(0,1)$.}\
		                  \makebox[\linewidth]{
\begin{tabular}{llllllllll}
\hline
 $n$   & $\kappa_2(A)$        & $e_\text{orthonormal}$   & $e_\text{orthogonal}$   & $e_\text{rank-Greville}$   & $e_\text{Greville}$   & $e_\text{Cholesky}$   & $e_\text{gelsd}$     & $e_\text{gelss}$     \\
\hline
 $4$   & $2.44\mathrm{e}{+0}$ & $1.06\mathrm{e}{+2}$     & $7.46\mathrm{e}{+1}$    & $2.02\mathrm{e}{+1}$       & $2.84\mathrm{e}{+0}$  & $7.57\mathrm{e}{-1}$  & $2.08\mathrm{e}{+0}$ & $1.67\mathrm{e}{+0}$ \\
 $6$   & $2.25\mathrm{e}{+0}$ & $4.00\mathrm{e}{+0}$     & $4.98\mathrm{e}{+0}$    & $2.56\mathrm{e}{+0}$       & $1.05\mathrm{e}{+0}$  & $9.20\mathrm{e}{-1}$  & $1.35\mathrm{e}{+0}$ & $1.86\mathrm{e}{+0}$ \\
 $8$   & $2.74\mathrm{e}{+0}$ & $3.98\mathrm{e}{+0}$     & $4.43\mathrm{e}{+0}$    & $2.86\mathrm{e}{+0}$       & $2.23\mathrm{e}{+0}$  & $1.39\mathrm{e}{+0}$  & $1.81\mathrm{e}{+0}$ & $2.63\mathrm{e}{+0}$ \\
 $10$  & $2.84\mathrm{e}{+0}$ & $1.14\mathrm{e}{+3}$     & $4.60\mathrm{e}{+2}$    & $3.18\mathrm{e}{+2}$       & $2.24\mathrm{e}{+1}$  & $1.54\mathrm{e}{+0}$  & $1.99\mathrm{e}{+0}$ & $2.21\mathrm{e}{+0}$ \\
\hline
\end{tabular}
}\end{table}
\begin{table}\
		                  \caption{\label{table_res_random}Empiric residual errors associated with the pseudoinverse computation of random matrices $R_\mathcal{N} \in \mathbb{R}^{3n\times n}$ with elements distributed from $\mathcal{N}(0,1)$.}\
		                  \makebox[\linewidth]{
\begin{tabular}{llllllllll}
\hline
 $n$   & $\kappa_2(A)$        & $res_\text{orthonormal}$   & $res_\text{orthogonal}$   & $res_\text{rank-Greville}$   & $res_\text{Greville}$   & $res_\text{Cholesky}$   & $res_\text{gelsy}$    & $res_\text{gelsd}$    & $res_\text{gelss}$    \\
\hline
 $4$   & $2.44\mathrm{e}{+0}$ & $3.45\mathrm{e}{-14}$      & $2.55\mathrm{e}{-14}$     & $6.36\mathrm{e}{-15}$        & $7.20\mathrm{e}{-16}$   & $1.25\mathrm{e}{-16}$   & $2.24\mathrm{e}{-16}$ & $4.38\mathrm{e}{-16}$ & $4.68\mathrm{e}{-16}$ \\
 $6$   & $2.25\mathrm{e}{+0}$ & $8.53\mathrm{e}{-16}$      & $1.69\mathrm{e}{-15}$     & $7.61\mathrm{e}{-16}$        & $2.09\mathrm{e}{-16}$   & $2.32\mathrm{e}{-16}$   & $2.24\mathrm{e}{-16}$ & $4.10\mathrm{e}{-16}$ & $6.77\mathrm{e}{-16}$ \\
 $8$   & $2.74\mathrm{e}{+0}$ & $1.24\mathrm{e}{-15}$      & $1.05\mathrm{e}{-15}$     & $3.88\mathrm{e}{-16}$        & $3.39\mathrm{e}{-16}$   & $2.77\mathrm{e}{-16}$   & $2.48\mathrm{e}{-16}$ & $4.96\mathrm{e}{-16}$ & $6.16\mathrm{e}{-16}$ \\
 $10$  & $2.84\mathrm{e}{+0}$ & $4.75\mathrm{e}{-13}$      & $1.73\mathrm{e}{-13}$     & $3.73\mathrm{e}{-14}$        & $3.22\mathrm{e}{-15}$   & $4.36\mathrm{e}{-16}$   & $2.64\mathrm{e}{-16}$ & $6.33\mathrm{e}{-16}$ & $8.02\mathrm{e}{-16}$ \\
\hline
\end{tabular}
}\end{table}

\begin{table}\
		                  \caption{\label{table_e_ill-cond}Empiric stability factors associated with the pseudoinverse computation of random ill-conditioned matrices $R_\mathcal{N}^4 \in \mathbb{R}^{n\times n}$.}\
		                  \makebox[\linewidth]{
\begin{tabular}{llllllllll}
\hline
 $n$   & $\kappa_2(A)$        & $e_\text{orthonormal}$   & $e_\text{orthogonal}$   & $e_\text{rank-Greville}$   & $e_\text{Greville}$   & $e_\text{Cholesky}$   & $e_\text{gelsy}$     & $e_\text{gelsd}$     & $e_\text{gelss}$     \\
\hline
 $6$   & $3.39\mathrm{e}{+2}$ & $5.52\mathrm{e}{+0}$     & $3.04\mathrm{e}{+0}$    & $1.39\mathrm{e}{+1}$       & $8.87\mathrm{e}{+0}$  & $3.72\mathrm{e}{+1}$  & $5.59\mathrm{e}{-2}$ & $6.86\mathrm{e}{-2}$ & $7.36\mathrm{e}{-2}$ \\
 $8$   & $1.08\mathrm{e}{+2}$ & $1.50\mathrm{e}{+0}$     & $3.07\mathrm{e}{+0}$    & $4.16\mathrm{e}{+0}$       & $6.85\mathrm{e}{+0}$  & $2.40\mathrm{e}{+0}$  & $1.66\mathrm{e}{-1}$ & $1.58\mathrm{e}{-1}$ & $1.72\mathrm{e}{-1}$ \\
 $10$  & $2.92\mathrm{e}{+7}$ & $3.53\mathrm{e}{+0}$     & $2.62\mathrm{e}{+0}$    & $8.25\mathrm{e}{+0}$       & $5.62\mathrm{e}{+0}$  & $6.60\mathrm{e}{+5}$  & $2.07\mathrm{e}{-1}$ & $1.49\mathrm{e}{-2}$ & $1.49\mathrm{e}{-2}$ \\
 $12$  & $5.78\mathrm{e}{+4}$ & $2.47\mathrm{e}{+1}$     & $1.64\mathrm{e}{+1}$    & $2.41\mathrm{e}{+2}$       & $3.06\mathrm{e}{+1}$  & $1.62\mathrm{e}{+3}$  & $2.80\mathrm{e}{-2}$ & $7.38\mathrm{e}{-2}$ & $7.38\mathrm{e}{-2}$ \\
\hline
\end{tabular}
}\end{table}
\begin{table}\
		                  \caption{\label{table_res_ill-cond}Empiric residual errors associated with the pseudoinverse computation of random ill-conditioned matrices $R_\mathcal{N}^4 \in \mathbb{R}^{n\times n}$.}\
		                  \makebox[\linewidth]{
\begin{tabular}{llllllllll}
\hline
 $n$   & $\kappa_2(A)$        & $res_\text{orthonormal}$   & $res_\text{orthogonal}$   & $res_\text{rank-Greville}$   & $res_\text{Greville}$   & $res_\text{Cholesky}$   & $res_\text{gelsy}$    & $res_\text{gelsd}$    & $res_\text{gelss}$    \\
\hline
 $6$   & $3.39\mathrm{e}{+2}$ & $1.40\mathrm{e}{-15}$      & $7.32\mathrm{e}{-16}$     & $3.38\mathrm{e}{-15}$        & $2.03\mathrm{e}{-15}$   & $1.01\mathrm{e}{-14}$   & $5.06\mathrm{e}{-17}$ & $8.23\mathrm{e}{-17}$ & $4.55\mathrm{e}{-17}$ \\
 $8$   & $1.08\mathrm{e}{+2}$ & $8.97\mathrm{e}{-16}$      & $1.55\mathrm{e}{-15}$     & $1.05\mathrm{e}{-15}$        & $1.63\mathrm{e}{-15}$   & $2.90\mathrm{e}{-15}$   & $1.02\mathrm{e}{-16}$ & $3.00\mathrm{e}{-16}$ & $2.79\mathrm{e}{-16}$ \\
 $10$  & $2.92\mathrm{e}{+7}$ & $1.57\mathrm{e}{-15}$      & $8.01\mathrm{e}{-16}$     & $1.83\mathrm{e}{-15}$        & $1.25\mathrm{e}{-15}$   & $9.58\mathrm{e}{-10}$   & $6.37\mathrm{e}{-17}$ & $1.71\mathrm{e}{-16}$ & $1.58\mathrm{e}{-16}$ \\
 $12$  & $5.78\mathrm{e}{+4}$ & $6.09\mathrm{e}{-15}$      & $5.01\mathrm{e}{-15}$     & $5.36\mathrm{e}{-14}$        & $6.90\mathrm{e}{-15}$   & $1.55\mathrm{e}{-12}$   & $4.71\mathrm{e}{-17}$ & $9.97\mathrm{e}{-17}$ & $1.36\mathrm{e}{-16}$ \\
\hline
\end{tabular}
}\end{table}

\begin{table}\
		                  \caption{\label{table_e_USV}Empiric stability factors associated with the pseudoinverse computation of random matrices $USV^\top \in \mathbb{R}^{5n\times n}$, where $U$ and $V$ are random column orthogonal matrices and $S = diag(1, 2^\frac{1}{2}, \ldots, 2^\frac{n-1}{2})$.}\
		                  \makebox[\linewidth]{
\begin{tabular}{llllllllll}
\hline
 $n$   & $\kappa_2(A)$        & $e_\text{orthonormal}$   & $e_\text{orthogonal}$   & $e_\text{rank-Greville}$   & $e_\text{Greville}$   & $e_\text{Cholesky}$   & $e_\text{gelsy}$     & $e_\text{gelsd}$     & $e_\text{gelss}$     \\
\hline
 $10$  & $2.26\mathrm{e}{+1}$ & $6.72\mathrm{e}{+0}$     & $4.60\mathrm{e}{+0}$    & $4.19\mathrm{e}{+1}$       & $1.09\mathrm{e}{+1}$  & $1.85\mathrm{e}{+0}$  & $2.77\mathrm{e}{-1}$ & $2.94\mathrm{e}{-1}$ & $2.89\mathrm{e}{-1}$ \\
 $15$  & $1.28\mathrm{e}{+2}$ & $1.63\mathrm{e}{+2}$     & $1.60\mathrm{e}{+2}$    & $2.40\mathrm{e}{+3}$       & $2.99\mathrm{e}{+3}$  & $1.16\mathrm{e}{+1}$  & $1.70\mathrm{e}{-1}$ & $2.85\mathrm{e}{-1}$ & $2.93\mathrm{e}{-1}$ \\
 $20$  & $7.24\mathrm{e}{+2}$ & $1.57\mathrm{e}{+2}$     & $2.41\mathrm{e}{+2}$    & $1.52\mathrm{e}{+4}$       & $1.07\mathrm{e}{+4}$  & $4.24\mathrm{e}{+1}$  & $1.60\mathrm{e}{-1}$ & $1.85\mathrm{e}{-1}$ & $1.85\mathrm{e}{-1}$ \\
\hline
\end{tabular}
}\end{table}
\begin{table}\
		                  \caption{\label{table_res_USV}Empiric residual errors associated with the pseudoinverse computation of random matrices $USV^\top \in \mathbb{R}^{5n\times n}$, where $U$ and $V$ are random column orthogonal matrices and $S = diag(1, 2^\frac{1}{2}, \ldots, 2^\frac{n-1}{2})$.}\
		                  \makebox[\linewidth]{
\begin{tabular}{llllllllll}
\hline
 $n$   & $\kappa_2(A)$        & $res_\text{orthonormal}$   & $res_\text{orthogonal}$   & $res_\text{rank-Greville}$   & $res_\text{Greville}$   & $res_\text{Cholesky}$   & $res_\text{gelsy}$    & $res_\text{gelsd}$    & $res_\text{gelss}$    \\
\hline
 $10$  & $2.26\mathrm{e}{+1}$ & $2.68\mathrm{e}{-15}$      & $4.25\mathrm{e}{-15}$     & $2.52\mathrm{e}{-15}$        & $5.92\mathrm{e}{-16}$   & $1.70\mathrm{e}{-15}$   & $1.01\mathrm{e}{-16}$ & $1.62\mathrm{e}{-16}$ & $1.26\mathrm{e}{-16}$ \\
 $15$  & $1.28\mathrm{e}{+2}$ & $1.67\mathrm{e}{-14}$      & $9.69\mathrm{e}{-15}$     & $9.20\mathrm{e}{-14}$        & $2.59\mathrm{e}{-13}$   & $7.37\mathrm{e}{-15}$   & $7.58\mathrm{e}{-17}$ & $1.19\mathrm{e}{-16}$ & $1.30\mathrm{e}{-16}$ \\
 $20$  & $7.24\mathrm{e}{+2}$ & $1.65\mathrm{e}{-14}$      & $6.11\mathrm{e}{-14}$     & $6.04\mathrm{e}{-13}$        & $4.93\mathrm{e}{-13}$   & $3.53\mathrm{e}{-14}$   & $4.31\mathrm{e}{-17}$ & $9.96\mathrm{e}{-17}$ & $5.61\mathrm{e}{-17}$ \\
\hline
\end{tabular}
}\end{table}

\begin{table}\
		                  \caption{\label{table_res_Kahan}Empiric residual errors associated with the pseudoinverse computation of Kahan matrices $K(c, s) \in \mathbb{R}^{100\times 100}$, with $c^2 + s^2 = 1$.}\
		                  \makebox[\linewidth]{
\begin{tabular}{llllllllll}
\hline
 c      & $\kappa_2(A)$         & $res_\text{orthonormal}$   & $res_\text{orthogonal}$   & $res_\text{rank-Greville}$   & $res_\text{Greville}$   & $res_\text{Cholesky}$   & $res_\text{gelsy}$    & $res_\text{gelsd}$    & $res_\text{gelss}$    \\
\hline
 $0.10$ & $5.42\mathrm{e}{+4}$  & $5.57\mathrm{e}{-17}$      & $3.00\mathrm{e}{-17}$     & $3.50\mathrm{e}{-17}$        & $2.64\mathrm{e}{-17}$   & $3.40\mathrm{e}{-13}$   & $5.24\mathrm{e}{-17}$ & $7.61\mathrm{e}{-17}$ & $1.33\mathrm{e}{-16}$ \\
 $0.15$ & $1.13\mathrm{e}{+7}$  & $1.11\mathrm{e}{-17}$      & $2.16\mathrm{e}{-17}$     & $1.03\mathrm{e}{-17}$        & $1.19\mathrm{e}{-17}$   & $1.60\mathrm{e}{-10}$   & $1.52\mathrm{e}{-17}$ & $5.61\mathrm{e}{-17}$ & $7.97\mathrm{e}{-17}$ \\
 $0.20$ & $2.18\mathrm{e}{+9}$  & $8.10\mathrm{e}{-18}$      & $7.96\mathrm{e}{-18}$     & $2.42\mathrm{e}{-18}$        & $2.41\mathrm{e}{-18}$   & failure                 & $7.40\mathrm{e}{-18}$ & $6.30\mathrm{e}{-17}$ & $5.69\mathrm{e}{-17}$ \\
 $0.25$ & $4.37\mathrm{e}{+11}$ & $2.07\mathrm{e}{-18}$      & $1.06\mathrm{e}{-18}$     & $1.14\mathrm{e}{-18}$        & $1.17\mathrm{e}{-18}$   & $1.61\mathrm{e}{-6}$    & $8.74\mathrm{e}{-18}$ & $4.85\mathrm{e}{-17}$ & $4.39\mathrm{e}{-17}$ \\
 $0.30$ & $9.77\mathrm{e}{+13}$ & $3.01\mathrm{e}{-19}$      & $4.31\mathrm{e}{-19}$     & $1.92\mathrm{e}{-19}$        & $1.94\mathrm{e}{-19}$   & $7.98\mathrm{e}{-4}$    & $1.82\mathrm{e}{-19}$ & $1.33\mathrm{e}{-16}$ & $2.43\mathrm{e}{-17}$ \\
 $0.35$ & $2.57\mathrm{e}{+16}$ & $3.57\mathrm{e}{-20}$      & $4.27\mathrm{e}{-20}$     & $2.73\mathrm{e}{-20}$        & $2.18\mathrm{e}{-20}$   & failure                 & $5.05\mathrm{e}{-20}$ & $9.43\mathrm{e}{-18}$ & $2.03\mathrm{e}{-18}$ \\
 $0.40$ & $8.36\mathrm{e}{+18}$ & $3.46\mathrm{e}{-21}$      & $3.51\mathrm{e}{-21}$     & $3.09\mathrm{e}{-21}$        & $2.55\mathrm{e}{-21}$   & $1.02\mathrm{e}{-4}$    & $1.08\mathrm{e}{-20}$ & $3.17\mathrm{e}{-19}$ & $3.04\mathrm{e}{-19}$ \\
\hline
\end{tabular}
}\end{table}

The algorithms described in this paper (rank-Greville and variants), including the original Greville algorithm, perform roughly equivalently in terms of numerical stability. Furthermore, the stability of these Greville-like algorithms seems much less dependent on $\kappa_2(A)$ than the Cholesky based algorithm. As expected, we found these algorithms to be neither mixed forward-backward nor forward stable. As a consequence, the much more robust LAPACK least-squares solvers are to be recommended when numerical stability is crucial. However, the stability of Greville-like algorithms are competitive compared to the Cholesky-based LS solvers.

A compromise between update efficiency and numerical stability can be searched among the full QR decomposition updating algorithms\cite{bjorck1996numerical}, or even faster, stable updating algorithms for Gram-Schmidt QR factorization\cite{daniel1976reorthogonalization, bjorck1996numerical}. 
\clearpage
\section{Conclusion}

In this paper, we first derive a simple explicit formula for the recursive least-squares problem using a general rank decomposition update scheme.
Based on this, we devise a transparent, Greville-like algorithm. We also introduce two variants bridging the gap between Greville's algorithm and QR-based least-squares solvers.
In contrast to Greville's algorithm, we maintain a rank decomposition at each update step. This allows us to exploit rank-deficiency to reach an asymptotic computational complexity of $O(mr)$ for updating a least-squares solution when adding an observation, leading to a total complexity of $O(mnr)$ for computing the full least-squares solution. 
This complexity is lower than Greville's algorithm or any commonly available solver tested, even though a truncated QR factorization based solver can achieve such a $O(mnr)$ bound for computing the full least-squares solution\cite{qr_truncated}. Nonetheless, a $O(mr)$ bound for the least-squares solution update step is, to our knowledge, lower than those achieved by the more sophisticated updating algorithms explicitly reported in the literature. 
We have implemented these algorithms in Python3, using Numpy. This publicly available implementation offers a recursive least-squares solver ($O(mr)$ operations per update), with optional pseudoinverse ($O(mn)$) and covariance support ($O(m^2)$). 
The numerical stability of these Greville-like algorithms were empirically found to be significantly inferior compared to common LAPACK solvers. However, it is important to note that the algebraic simplicity of some of these Greville-like methods make them compatible with exact numerical representation, without the need to use symbolic computing.

\bibliography{biblio}

\begin{thebibliography}{25}
\providecommand{\natexlab}[1]{#1}
\providecommand{\url}[1]{\texttt{#1}}
\providecommand{\href}[2]{#2}
\providecommand{\path}[1]{#1}
\providecommand{\eprint}[1]{\href{http://arxiv.org/abs/#1}{\path{#1}}}
\providecommand{\DOIprefix}{doi:}
\providecommand{\ArXivprefix}{arXiv:}
\providecommand{\URLprefix}{URL: }
\providecommand{\Pubmedprefix}{pmid:}
\providecommand{\doi}[1]{\href{http://dx.doi.org/#1}{\path{#1}}}
\providecommand{\Pubmed}[1]{\href{pmid:#1}{\path{#1}}}
\providecommand{\BIBand}{and}
\providecommand{\bibinfo}[2]{#2}
\ifx\xfnm\undefined \def\xfnm[#1]{\unskip,\space#1}\fi
\bibitem[{Bjorck(1996)}]{bjorck1996numerical}
\bibinfo{author}{Bjorck\xfnm[ A.]}.
\newblock \bibinfo{title}{Numerical Methods for Least Squares Problems}.
\newblock Other Titles in Applied Mathematics; \bibinfo{publisher}{Society for
  Industrial and Applied Mathematics}; \bibinfo{year}{1996}.
\newblock ISBN \bibinfo{isbn}{9780898713602}.
\newblock \URLprefix \url{https://doi.org/10.1137/1.9781611971484}.
\bibitem[{Penrose(1956)}]{penrose_1956_least_squares}
\bibinfo{author}{Penrose\xfnm[ R.]}.
\newblock \bibinfo{title}{On best approximate solutions of linear matrix
  equations}.
\newblock \bibinfo{journal}{Mathematical Proceedings of the Cambridge
  Philosophical Society}
  \bibinfo{year}{1956};\bibinfo{volume}{52}(\bibinfo{number}{1}):\bibinfo{pages}{17–19}.
\newblock \DOIprefix\doi{10.1017/S0305004100030929}.
\bibitem[{Rakha(2004)}]{rakha_moorepenrose_2004}
\bibinfo{author}{Rakha\xfnm[ M.A.]}.
\newblock \bibinfo{title}{On the {Moore}{\textendash}{Penrose} generalized
  inverse matrix}.
\newblock \bibinfo{journal}{Applied Mathematics and Computation}
  \bibinfo{year}{2004};\bibinfo{volume}{158}(\bibinfo{number}{1}):\bibinfo{pages}{185--200}.
\newblock \URLprefix
  \url{https://linkinghub.elsevier.com/retrieve/pii/S0096300303009998}.
  \DOIprefix\doi{10.1016/j.amc.2003.09.004}.
\bibitem[{Toutounian and Ataei(2009)}]{toutounian_new_2009}
\bibinfo{author}{Toutounian\xfnm[ F.]}, \bibinfo{author}{Ataei\xfnm[ A.]}.
\newblock \bibinfo{title}{A new method for computing
  {Moore}{\textendash}{Penrose} inverse matrices}.
\newblock \bibinfo{journal}{Journal of Computational and Applied Mathematics}
  \bibinfo{year}{2009};\bibinfo{volume}{228}(\bibinfo{number}{1}):\bibinfo{pages}{412--417}.
\newblock \URLprefix
  \url{https://linkinghub.elsevier.com/retrieve/pii/S0377042708005062}.
  \DOIprefix\doi{10.1016/j.cam.2008.10.008}.
\bibitem[{Moore(1920)}]{moore1920reciprocalmatrix}
\bibinfo{author}{Moore\xfnm[ E.]}.
\newblock \bibinfo{title}{On the reciprocal of the general algebraic matrix}.
\newblock \bibinfo{journal}{Bull Amer Math Soc}
  \bibinfo{year}{1920};\bibinfo{volume}{26}(\bibinfo{number}{9}):\bibinfo{pages}{394--395}.
\newblock \URLprefix \url{https://doi.org/10.1090/S0002-9904-1920-03322-7}.
  \DOIprefix\doi{10.1090/S0002-9904-1920-03322-7}.
\bibitem[{Penrose(1955)}]{penrose_1955_pseudoinverse}
\bibinfo{author}{Penrose\xfnm[ R.]}.
\newblock \bibinfo{title}{A generalized inverse for matrices}.
\newblock \bibinfo{journal}{Mathematical Proceedings of the Cambridge
  Philosophical Society}
  \bibinfo{year}{1955};\bibinfo{volume}{51}(\bibinfo{number}{3}):\bibinfo{pages}{406–413}.
\newblock \DOIprefix\doi{10.1017/S0305004100030401}.
\bibitem[{Bj{\"o}rck(2009)}]{least_squares_overview}
\bibinfo{author}{Bj{\"o}rck\xfnm[ {\AA}.]}.
\newblock \bibinfo{title}{Least Squares Problems}.
\newblock \bibinfo{address}{Boston, MA}: \bibinfo{publisher}{Springer US}.
\newblock ISBN \bibinfo{isbn}{978-0-387-74759-0}; \bibinfo{year}{2009}, p.
  \bibinfo{pages}{1856--1866}.
\newblock \URLprefix \url{https://doi.org/10.1007/978-0-387-74759-0_329}.
  \DOIprefix\doi{10.1007/978-0-387-74759-0_329}.
\bibitem[{Courrieu(2005)}]{Courrieu2005FastCO}
\bibinfo{author}{Courrieu\xfnm[ P.]}.
\newblock \bibinfo{title}{Fast computation of moore-penrose inverse matrices}.
\newblock \bibinfo{journal}{ArXiv}
  \bibinfo{year}{2005};\bibinfo{volume}{abs/0804.4809}.
\bibitem[{Greville(1960)}]{greville}
\bibinfo{author}{Greville\xfnm[ T.N.E.]}.
\newblock \bibinfo{title}{Some applications of the pseudoinverse of a matrix}.
\newblock \bibinfo{journal}{SIAM Review}
  \bibinfo{year}{1960};\bibinfo{volume}{2}(\bibinfo{number}{1}):\bibinfo{pages}{15--22}.
\newblock \URLprefix \url{https://doi.org/10.1137/1002004}.
  \DOIprefix\doi{10.1137/1002004}.
  \href{http://arxiv.org/abs/https://doi.org/10.1137/1002004}{\tt
  arXiv:https://doi.org/10.1137/1002004}.
\bibitem[{Albert and Sittler(1965)}]{albert_sittler}
\bibinfo{author}{Albert\xfnm[ A.]}, \bibinfo{author}{Sittler\xfnm[ R.W.]}.
\newblock \bibinfo{title}{A method for computing least squares estimators that
  keep up with the data}.
\newblock \bibinfo{journal}{Journal of the Society for Industrial and Applied
  Mathematics Series A Control}
  \bibinfo{year}{1965};\bibinfo{volume}{3}(\bibinfo{number}{3}):\bibinfo{pages}{384--417}.
\newblock \URLprefix \url{https://doi.org/10.1137/0303026}.
  \DOIprefix\doi{10.1137/0303026}.
  \href{http://arxiv.org/abs/https://doi.org/10.1137/0303026}{\tt
  arXiv:https://doi.org/10.1137/0303026}.
\bibitem[{{Le Gall}(2012)}]{fast_rect_dot}
\bibinfo{author}{{Le Gall}\xfnm[ F.]}.
\newblock \bibinfo{title}{Faster algorithms for rectangular matrix
  multiplication}.
\newblock In: \bibinfo{booktitle}{2012 IEEE 53rd Annual Symposium on
  Foundations of Computer Science}. \bibinfo{year}{2012}, p.
  \bibinfo{pages}{514--523}.
\newblock \DOIprefix\doi{10.1109/FOCS.2012.80}.
\bibitem[{Mirsky(1990)}]{mirsky1990introduction}
\bibinfo{author}{Mirsky\xfnm[ L.]}.
\newblock \bibinfo{title}{An Introduction to Linear Algebra}.
\newblock Dover Books on Mathematics; \bibinfo{publisher}{Dover};
  \bibinfo{year}{1990}.
\newblock ISBN \bibinfo{isbn}{9780486664347}.
\newblock \URLprefix \url{https://books.google.fr/books?id=ULMmheb26ZcC}.
\bibitem[{Greene(2003)}]{greene2003econometric}
\bibinfo{author}{Greene\xfnm[ W.]}.
\newblock \bibinfo{title}{Econometric Analysis}.
\newblock \bibinfo{publisher}{Prentice Hall}; \bibinfo{year}{2003}.
\newblock ISBN \bibinfo{isbn}{9780130661890}.
\bibitem[{Van~Rossum and Drake(2009)}]{Python3}
\bibinfo{author}{Van~Rossum\xfnm[ G.]}, \bibinfo{author}{Drake\xfnm[ F.L.]}.
\newblock \bibinfo{title}{Python 3 Reference Manual}.
\newblock \bibinfo{address}{Paramount, CA}: \bibinfo{publisher}{CreateSpace};
  \bibinfo{year}{2009}.
\newblock ISBN \bibinfo{isbn}{1441412697, 9781441412690}.
\bibitem[{P\'erez and Granger(2007)}]{IPython}
\bibinfo{author}{P\'erez\xfnm[ F.]}, \bibinfo{author}{Granger\xfnm[ B.E.]}.
\newblock \bibinfo{title}{{IP}ython: a system for interactive scientific
  computing}.
\newblock \bibinfo{journal}{Computing in Science and Engineering}
  \bibinfo{year}{2007};\bibinfo{volume}{9}(\bibinfo{number}{3}):\bibinfo{pages}{21--29}.
\newblock \URLprefix \url{https://ipython.org}.
  \DOIprefix\doi{10.1109/MCSE.2007.53}.
\bibitem[{{Virtanen} et~al.(2020){Virtanen}, {Gommers}, {Oliphant},
  {Haberland}, {Reddy}, {Cournapeau} et~al.}]{2020SciPy-NMeth}
\bibinfo{author}{{Virtanen}\xfnm[ P.]}, \bibinfo{author}{{Gommers}\xfnm[ R.]},
  \bibinfo{author}{{Oliphant}\xfnm[ T.E.]}, \bibinfo{author}{{Haberland}\xfnm[
  M.]}, \bibinfo{author}{{Reddy}\xfnm[ T.]},
  \bibinfo{author}{{Cournapeau}\xfnm[ D.]}, et~al.
\newblock \bibinfo{title}{{SciPy 1.0: Fundamental Algorithms for Scientific
  Computing in Python}}.
\newblock \bibinfo{journal}{Nature Methods}
  \bibinfo{year}{2020};\bibinfo{volume}{17}:\bibinfo{pages}{261--272}.
\newblock \DOIprefix\doi{https://doi.org/10.1038/s41592-019-0686-2}.
\bibitem[{Oliphant(2015)}]{Numpy}
\bibinfo{author}{Oliphant\xfnm[ T.E.]}.
\newblock \bibinfo{title}{Guide to NumPy}.
\newblock \bibinfo{edition}{2nd} ed.; \bibinfo{address}{USA}:
  \bibinfo{publisher}{CreateSpace Independent Publishing Platform};
  \bibinfo{year}{2015}.
\newblock ISBN \bibinfo{isbn}{151730007X, 9781517300074}.
\bibitem[{Walt et~al.(2011)Walt, Colbert and Varoquaux}]{Numpy2}
\bibinfo{author}{Walt\xfnm[ S.v.d.]}, \bibinfo{author}{Colbert\xfnm[ S.C.]},
  \bibinfo{author}{Varoquaux\xfnm[ G.]}.
\newblock \bibinfo{title}{The numpy array: A structure for efficient numerical
  computation}.
\newblock \bibinfo{journal}{Computing in Science \& Engineering}
  \bibinfo{year}{2011};\bibinfo{volume}{13}(\bibinfo{number}{2}):\bibinfo{pages}{22--30}.
\newblock \URLprefix
  \url{https://aip.scitation.org/doi/abs/10.1109/MCSE.2011.37}.
  \DOIprefix\doi{10.1109/MCSE.2011.37}.
\bibitem[{Anderson et~al.(1999)Anderson, Bai, Bischof, Blackford, Demmel,
  Dongarra et~al.}]{lapack_guide}
\bibinfo{author}{Anderson\xfnm[ E.]}, \bibinfo{author}{Bai\xfnm[ Z.]},
  \bibinfo{author}{Bischof\xfnm[ C.]}, \bibinfo{author}{Blackford\xfnm[ L.S.]},
  \bibinfo{author}{Demmel\xfnm[ J.]}, \bibinfo{author}{Dongarra\xfnm[ J.]},
  et~al.
\newblock \bibinfo{title}{LAPACK Users' Guide}.
\newblock \bibinfo{edition}{Third} ed.; \bibinfo{publisher}{Society for
  Industrial and Applied Mathematics}; \bibinfo{year}{1999}.
\newblock \URLprefix
  \url{https://epubs.siam.org/doi/abs/10.1137/1.9780898719604}.
  \DOIprefix\doi{10.1137/1.9780898719604}.
  \href{http://arxiv.org/abs/https://epubs.siam.org/doi/pdf/10.1137/1.9780898719604}{\tt
  arXiv:https://epubs.siam.org/doi/pdf/10.1137/1.9780898719604}.
\bibitem[{Higham(2002)}]{higham_accuracy_algo}
\bibinfo{author}{Higham\xfnm[ N.J.]}.
\newblock \bibinfo{title}{Accuracy and Stability of Numerical Algorithms}.
\newblock \bibinfo{edition}{Second} ed.; \bibinfo{publisher}{Society for
  Industrial and Applied Mathematics}; \bibinfo{year}{2002}.
\newblock \URLprefix
  \url{https://epubs.siam.org/doi/abs/10.1137/1.9780898718027}.
  \DOIprefix\doi{10.1137/1.9780898718027}.
  \href{http://arxiv.org/abs/https://epubs.siam.org/doi/pdf/10.1137/1.9780898718027}{\tt
  arXiv:https://epubs.siam.org/doi/pdf/10.1137/1.9780898718027}.
\bibitem[{Ben-Israel and Greville(2003)}]{ben_greville_generalized}
\bibinfo{author}{Ben-Israel\xfnm[ A.]}, \bibinfo{author}{Greville\xfnm[ T.N.]}.
\newblock \bibinfo{title}{Generalized inverses: theory and applications};
  vol.~\bibinfo{volume}{15}.
\newblock \bibinfo{publisher}{Springer Science \& Business Media};
  \bibinfo{year}{2003}.
\bibitem[{Smoktunowicz and Wrobel(2012)}]{numerical_tests}
\bibinfo{author}{Smoktunowicz\xfnm[ A.]}, \bibinfo{author}{Wrobel\xfnm[ I.]}.
\newblock \bibinfo{title}{Numerical aspects of computing the moore-penrose
  inverse of full column rank matrices}.
\newblock \bibinfo{journal}{BIT Numerical Mathematics}
  \bibinfo{year}{2012};\bibinfo{volume}{52}(\bibinfo{number}{2}):\bibinfo{pages}{503--524}.
\bibitem[{Kahan(1966)}]{Kahan_1966}
\bibinfo{author}{Kahan\xfnm[ W.]}.
\newblock \bibinfo{title}{Numerical linear algebra}.
\newblock \bibinfo{journal}{Canadian Mathematical Bulletin}
  \bibinfo{year}{1966};\bibinfo{volume}{9}(\bibinfo{number}{5}):\bibinfo{pages}{757–801}.
\newblock \DOIprefix\doi{10.4153/CMB-1966-083-2}.
\bibitem[{Daniel et~al.(1976)Daniel, Gragg, Kaufman and
  Stewart}]{daniel1976reorthogonalization}
\bibinfo{author}{Daniel\xfnm[ J.W.]}, \bibinfo{author}{Gragg\xfnm[ W.B.]},
  \bibinfo{author}{Kaufman\xfnm[ L.]}, \bibinfo{author}{Stewart\xfnm[ G.W.]}.
\newblock \bibinfo{title}{Reorthogonalization and stable algorithms for
  updating the gram-schmidt qr factorization}.
\newblock \bibinfo{journal}{Mathematics of Computation}
  \bibinfo{year}{1976};\bibinfo{volume}{30}(\bibinfo{number}{136}):\bibinfo{pages}{772--795}.
\bibitem[{Businger and Golub(1965)}]{qr_truncated}
\bibinfo{author}{Businger\xfnm[ P.]}, \bibinfo{author}{Golub\xfnm[ G.H.]}.
\newblock \bibinfo{title}{Linear least squares solutions by householder
  transformations}.
\newblock \bibinfo{journal}{Numer Math}
  \bibinfo{year}{1965};\bibinfo{volume}{7}(\bibinfo{number}{3}):\bibinfo{pages}{269–276}.
\newblock \URLprefix \url{https://doi.org/10.1007/BF01436084}.
  \DOIprefix\doi{10.1007/BF01436084}.

\end{thebibliography}
\end{document}